\long\def\ca#1\cb{} 
\newcommand{\braket}[2]{\langle #1 \hspace{1pt} | \hspace{1pt} #2 \rangle}
\newcommand{\ket}[1]{|#1\rangle}               
\newcommand{\bra}[1]{\langle #1|}              
\newcommand{\CC}{\mathcal{C}}
\newcommand{\HC}{\mathcal{H}}
\newcommand{\PC}{\mathcal{P}}
\newcommand{\SC}{\mathcal{S}}
\newcommand{\ZC}{\mathcal{Z}}
\renewcommand{\geq}{\geqslant}
\renewcommand{\leq}{\leqslant}
\renewcommand{\vec}[1]{\boldsymbol{#1}}  
\newtheorem{theorem}{Theorem}
\newtheorem{lemma}{Lemma}
\newtheorem{proposition}{Proposition}
\newtheorem{definition}{Definition}
\newtheorem{fact}{Fact}
\newcommand{\tr}[1]{\text{tr}\left[#1\right]}
\newcommand{\E}[1]{\mathbb{E}\left[#1\right]}
\begin{document}
\title{Multipartite entanglement measures via Bell basis measurements}

\author{Jacob L. Beckey}
\affiliation{JILA, NIST and University of Colorado, Boulder, Colorado 80309, USA}
\affiliation{Department of Physics, University of Colorado, Boulder, Colorado 80309, USA}

\author{Gerard Pelegr\'i}
\affiliation{Department of Physics and SUPA, University of Strathclyde, Glasgow G4 0NG, UK}

\author{Steph Foulds}
\affiliation{Physics Department, Durham University, South Road, Durham, DH1 3LE, UK}

\author{Natalie J. Pearson}
\affiliation{Department of Physics and SUPA, University of Strathclyde, Glasgow G4 0NG, UK}

\begin{abstract} 
We show how to estimate a broad class of multipartite entanglement measures from Bell basis measurement data. In addition to lowering the experimental requirements relative to previously known methods of estimating these measures, our proposed scheme also enables a simpler analysis of the number of measurement repetitions required to achieve an $\epsilon$-close approximation of the measures, which we provide for each. We focus our analysis on the recently introduced Concentratable Entanglements  [Beckey \textit{et al.} \href{https://doi.org/10.1103/PhysRevLett.127.140501}{Phys. Rev. Lett. 127, 140501 (2021)}] because many other well-known multipartite entanglement measures are recovered as special cases of this family of measures. We extend the definition of the Concentratable Entanglements to mixed states and show how to construct lower bounds on the mixed state Concentratable Entanglements that can also be estimated using only Bell basis measurement data. Finally, we demonstrate the feasibility of our methods by realistically simulating their implementation on a Rydberg atom quantum computer.
\end{abstract}
\maketitle

\textit{Introduction.}  
The precise control over quantum systems demonstrated in the past two decades has enabled rapid progress in the experimental study of quantum entanglement \cite{horodecki2009quantum,friis2018entanglement}. Entanglement plays an important role in enabling emerging quantum technologies to outperform their classical counterparts, with the degree and type of entanglement within the state determining its usefulness for a given task. Consequently the empirical characterization of entanglement is a problem of ubiquitous interest in quantum information science. While bipartite entanglement is well understood theoretically \cite{vedral2002uniqueness,horodecki2009quantum} and is routinely estimated in experimental settings, multipartite entanglement remains challenging to understand theoretically and probe experimentally \cite{friis2018entanglement}. When these considerations are coupled with the exponential scaling of the Hilbert space of multipartite systems, which makes quantum state tomography intractable at scale \cite{haah2017sample,odonnell2015efficient}, it is clear that there is a need for more experimentally efficient methods of multipartite entanglement quantification.

Recently, the authors of Ref. \cite{foulds2021controlled} conjectured that the output probabilities of the so-called \textit{parallelized c-SWAP test}, shown in Fig. \ref{fig:SWAPn}, could be used to construct a well-defined multipartite entanglement measure. The authors of Ref. \cite{beckey2021computable} then generalized this conjecture and proved that a whole family of multipartite entanglement measures could be constructed using the output probabilities of this circuit, depending on which ancilla qubits are measured. The resultant family of measures was dubbed the Concentratable Entanglements (CEs), and it was shown that many well-known multipartite entanglement measures could be recovered as special cases of this general family. Since their introduction, several interesting properties and applications of the CEs have also been studied \cite{cullen2022calculating,schatzki2021entangled,schatzki2022hierarchy}. We also note that the $n$-tangle \cite{wong2001potential}, another well-studied entanglement monotone, can be estimated via the parallelized c-SWAP test \cite{beckey2021computable}, and that the parallelized c-SWAP test was recently generalized to qudit and optical states \cite{prove22}.

From Fig.~\ref{fig:SWAPn}(a), it is  clear that the $n$-qubit c-SWAP test requires $n$ Toffoli gates as well $3n$ qubits (2 copies of the the quantum state of interest and $n$ ancilla qubits). The most promising platform for implementing the c-SWAP test is Rydberg atom systems \cite{saffman2010quantum,adams2019rydberg} due to their native ability to implement Toffoli gates \cite{brion07a,isenhower11,shi18,su18,su18a,beterov18,Levine2019parallel,khazali20,rasmussen20,li21,young21,pelegri2022high}. However, to make the CEs and related measures as accessible as possible, a method of estimating them that is experimentally feasible on all hardware platforms is needed. This work addresses this problem by introducing a method of estimating many multipartite entanglement measures from Bell basis measurement data -- an ancilla-free scheme that only requires one- and two-qubit gates acting on two copies of the quantum state of interest.

Bell basis measurements have played a crucial role in quantum information theory since the advent of protocols like quantum teleportation and superdense coding \cite{bennett1992communication,bennett1993teleporting,nielsen2000quantum}. More recently, Bell basis measurements have been implemented experimentally to estimate bipartite concurrences \cite{walborn2006experimental,walborn2007experimental}, non-stabilizerness (i.e. magic) \cite{haug2022scalable}, entanglement dynamics in many-body quantum systems \cite{daley2012measuring, islam2015measuring, kaufman2016quantum, bluvstein2022quantum}, and even to demonstrate quantum advantage in learning from experiments \cite{huang2022quantum}. These recent experiments corroborate the claim that our methods are feasible on today's hardware.

A limitation recently highlighted in Ref. \cite{cullen2022calculating} is that CEs were only well-defined on pure states. We address this limitation by first defining the CEs for mixed state inputs and then introducing lower bounds on these quantities which also depend only on Bell basis measurement data, thus making them readily accessible experimentally. 
 
This work is organized as follows. We first construct unbiased estimators, which depend only on Bell basis measurement data, for all entanglement measures computable using the parallelized c-SWAP test, thus recovering all results in Refs. \cite{foulds2021controlled,beckey2021computable} while using fewer resources. We then derive expressions showing how many measurement repetitions are needed to obtain an $\epsilon$-close approximation of these measures with high probability. Next, we extend the CEs to mixed states and introduce a family of lower bounds for the mixed state CEs which allow one to probe the multipartite entanglement of mixed quantum states, thus generalizing Refs. \cite{mintert2005concurrence,mintert2007observable,aolita2008scalable,beckey2021computable}. Finally, we demonstrate the feasibility of our methods by carrying out realistic, noisy experiments on a simulated Rydberg system. Background material, proofs, and simulation details can be found in the Supplementary Material~\cite{seeSupplementary}.

\textit{CEs via Parallelized c-SWAP circuit.}
To appreciate the utility of the Bell basis measurement scheme, one must first understand the CEs and how they can be estimated via the parallelized c-SWAP test. Thus, we begin by defining the CEs.

Let $\ket{\psi} \in (\mathbb{C}^2)^{\otimes n}$ denote a pure state of $n$-qubits. Further, denote the set of labels of the qubits as $\SC=\{1,2,\dots, n\}$. Throughout, we will let $s \subseteq \SC$ be any subset of the $n$ qubits with $\PC(s)$ the associated power set (i.e. the set of all subsets of $s$, which has cardinality $2^{|s|}$). With our notations in place, we can define the CE.

\begin{definition}[Ref. \cite{beckey2021computable}] \label{def:CE}
For any non-empty set of qubit labels $s\in \PC(\SC) \setminus \{\emptyset\}$, the Concentratable Entanglement is defined as 
\begin{align}\label{eq:CE}
    \CC_{\ket{\psi}}(s) &= 1 -\frac{1}{2^{|s|}} \sum_{\alpha \in \PC(s)} \tr{\rho_{\alpha}^2},
\end{align}
where the $\rho_{\alpha}$'s are reduced states of $\ket{\psi}\bra{\psi}$ obtained by tracing out subsystems with labels not in $\alpha$. For the trivial subset, we take $\tr{\rho_{\emptyset}^2}:=1$.
\end{definition}
When $s=\SC$, the sum in Def. \ref{def:CE} is simply a uniform average of subsystem purities. This matches the intuition that highly entangled pure states should have highly mixed (low purity) reduced states. Although many interesting properties of the CE are summarized in Ref. \cite{beckey2021computable}, we need only one more detail to motivate this current work. Namely, the fact that the CE can be estimated from the output probabilities of the parallelized c-SWAP test via
\begin{align}
    \CC_{\ket{\psi}}(s) &= 1- \sum_{\boldsymbol{z} \in \ZC_{\boldsymbol{0}}(s)} p(\boldsymbol{z}),
\end{align}
where $\boldsymbol{z} \in \{0,1\}^n$ denotes a length $n$ bitstring, $p(\boldsymbol{z})$ the probability of obtaining said bitstring, and $\ZC_{\boldsymbol{0}}(s)$ the set of all bitstrings with zeroes in the indices of $s$. As one can see from Fig. \ref{fig:SWAPn}, the parallelized c-SWAP test requires $3n$ qubits and $n$ Toffoli gates, which, on most platforms, must be further broken down into one- and two-qubit gates \cite{shende2008on}. Although some hardware platforms, like Rydberg atoms, can implement Toffoli gates natively with high fidelity \cite{pelegri2022high}, it would be preferable to eliminate the $3$-qubit gates altogether. This is exactly what the Bell basis method achieves while simultaneously reducing the qubit requirements from $3n$ to $2n$. Before seeing how this is done, we introduce some background on Bell basis measurements and introduce the required notation.

\begin{figure}[t!]
\centering
\includegraphics[width=0.9\columnwidth]{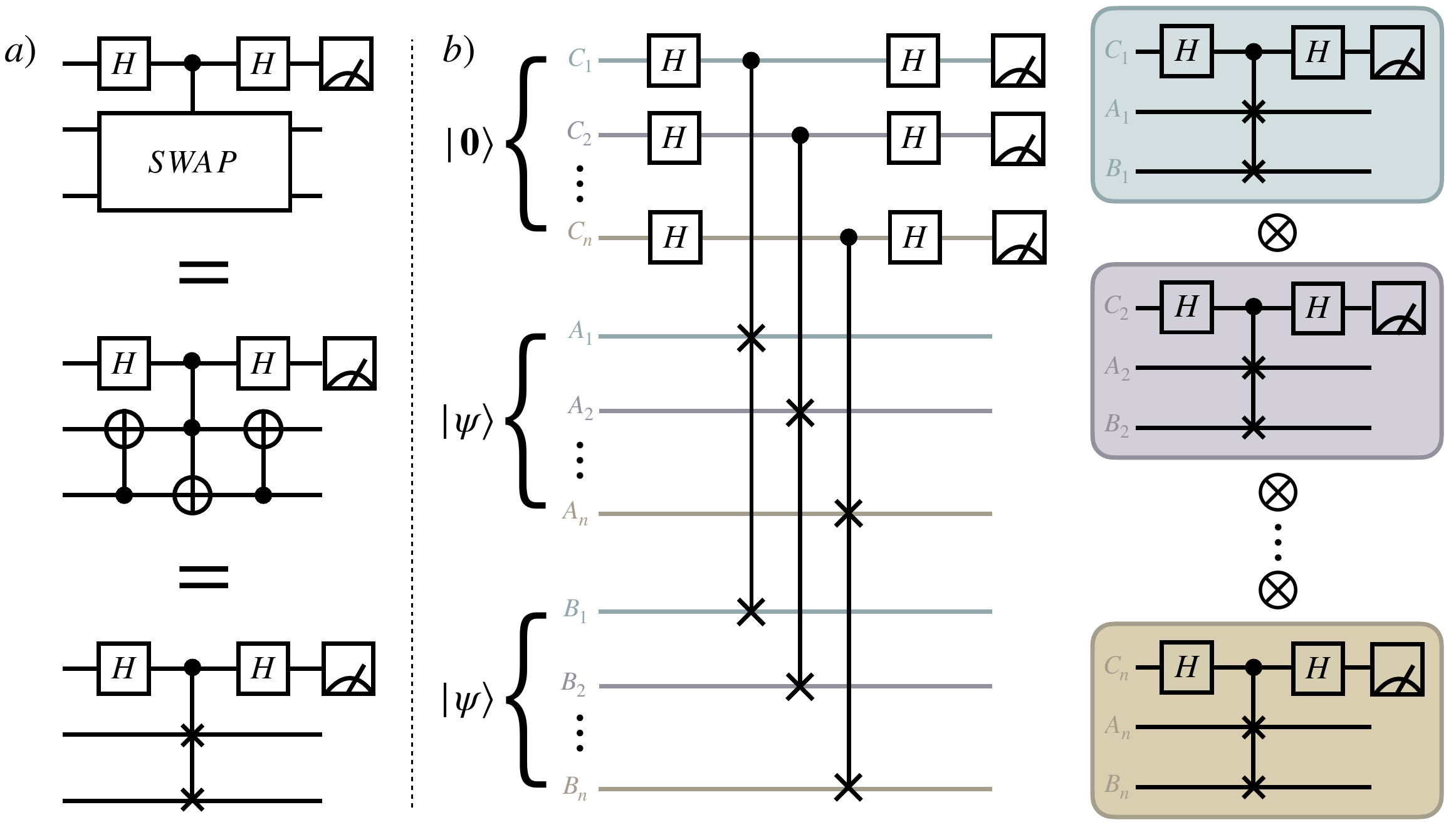}
\caption{\textbf{c-SWAP circuits.} a) Equivalent representations of the single qubit controlled-SWAP circuit. b) The $n$-qubit parallelized c-SWAP circuit can be used to probe a pure state $\ket{\psi}$'s entanglement \cite{foulds2021controlled,beckey2021computable}.}
\label{fig:SWAPn}
\end{figure} 
\textit{Bell basis measurements.}
Suppose we carry out $M$ rounds of Bell basis measurements. For each round $m \in \{1,\dots,M\}$, this consists of performing a Bell basis measurement on the $k$-th test and copy qubit for each $k \in \{1,\dots,n\}$, as shown pictorially in Fig. \ref{fig:bellBasis}. Measuring the $k$-th test and copy qubit in the Bell basis results in one of the four Bell states as the post measurement state 
\begin{align}
    B_k^{(m)} \in \left\{ \ket{\Phi^+}\bra{\Phi^+},\ket{\Phi^-}\bra{\Phi^-}, \ket{\Psi^+}\bra{\Psi^+},\ket{\Psi^-}\bra{\Psi^-}\right\}.
\end{align}
For our purposes, we consider $B_k^{(m)}$ as a random variable that takes values in the set of Bell basis projectors. For each of the $M$ rounds, we efficiently store the qubit label, $k$, and the corresponding measurement outcome $B_k^{(m)}$ in classical memory, which one can then post-process in a number of ways to obtain many entanglement measures of interest, as we will show.

To understand the power of Bell basis measurements, first note that the Bell states are eigenstates of the SWAP operator with eigenvalues $\pm 1$, so $\tr{\mathbb{F}_k B_k^{(m)}} = \pm 1$ for all $k,m$, where $\mathbb{F}_k$ is the SWAP operator acting on the $k$-th test and copy qubit. This connection between the SWAP operator and the Bell basis is why SWAP tests can be simulated by Bell basis measurement methods. For instance, one can construct an unbiased estimator of the purity of a single qubit as
\begin{align}\label{eq:purtiy-from-Bell}
       \E{ \frac{1}{M} \sum_{m=1}^M \tr{\mathbb{F} B^{(m)}}} &= \tr{\rho^2},
\end{align}
where the expectation is taken with respect to the empirical distribution resulting from Bell basis measurement outcomes. This method has been utilized by many experimental groups to estimate quantum purities \cite{daley2012measuring,islam2015measuring,bluvstein2022quantum,kaufman2016quantum}. In fact, from the data in Refs.~\cite{daley2012measuring,islam2015measuring,bluvstein2022quantum,kaufman2016quantum}, one could estimate \textit{all possible subsystem purities} of $n$-qubit states by extending the idea in Eq.~\ref{eq:purtiy-from-Bell}~\cite{garcia-escartin2013swap,seeSupplementary}. 


\textit{Multipartite entanglement from Bell basis measurement data.}
Our main results are concerned with the ancilla-free simulation of the parallelized c-SWAP test. The following theorems show how to recover all results of the c-SWAP test without the need for ancillary qubits or Toffoli gates, thus making the resulting entanglement measures far more experimentally accessible.

First, we show the existence of a family of unbiased estimators for the CEs which depend solely on Bell basis measurement outcomes.
\begin{theorem}
The quantities 
\begin{align}
    \hat{\mathcal{C}}_{\ket{\psi}}(s) &= 1- \frac{1}{M}\sum_{m=1}^M \prod_{k\in s} \left(\frac{1+\tr{\mathbb{F}_k B^{(m)}_k}}{2} \right),
\end{align}
are unbiased estimators of the Concentratable Entanglements. That is, for all $s \subseteq \SC$, 
\begin{align}
    \E{\hat{\mathcal{C}}_{\ket{\psi}}(s)} &= 1 - \frac{1}{2^{|s|}}\sum_{\alpha \in \mathcal{P}(s)} \text{tr}[\rho_{\alpha}^2],
\end{align}
where the expectation value is with respect to the probability distribution induced by the Bell basis measurements. 
\end{theorem}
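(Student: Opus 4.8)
The plan is to reduce the whole statement to a single-round calculation by linearity of expectation, and then to recognize the per-round product as the eigenvalue of a SWAP-based observable that happens to be diagonal in the Bell basis. Writing $P_k^{(m)} := \tfrac{1}{2}\bigl(1 + \tr{\mathbb{F}_k B_k^{(m)}}\bigr)$, I would first note that since each Bell state is an eigenvector of $\mathbb{F}_k$ with eigenvalue $\pm 1$, the number $P_k^{(m)}$ is exactly the eigenvalue of the symmetric-subspace projector $\Pi_k^+ := \tfrac{1}{2}(\id + \mathbb{F}_k)$ on the measured outcome, i.e.\ $P_k^{(m)} = \tr{\Pi_k^+ B_k^{(m)}} \in \{0,1\}$. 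Because each round independently prepares $\rho^{\otimes 2}$ with $\rho = \ket\psi\bra\psi$ and measures in the same basis, the rounds are identically distributed, so by linearity $\E{\hat{\mathcal{C}}_{\ket\psi}(s)} = 1 - \E{\prod_{k\in s} P_k^{(1)}}$, and it suffices to evaluate this single-round expectation.

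Next I would compute $\E{\prod_{k\in s} P_k^{(1)}}$ via Born's rule. The full measurement is the projective measurement onto the product Bell basis $\{B^{(1)} = \bigotimes_{k=1}^n B_k^{(1)}\}$ applied to $\rho^{\otimes 2}$, so outcome $B^{(1)}$ occurs with probability $\tr{B^{(1)} \rho^{\otimes 2}}$. The observable
\begin{equation}
O_s := \prod_{k\in s} \Pi_k^+ = \frac{1}{2^{|s|}} \sum_{\alpha \in \PC(s)} \mathbb{F}_\alpha, \qquad \mathbb{F}_\alpha := \prod_{k\in\alpha}\mathbb{F}_k,
\end{equation}
is diagonal in this product Bell basis because each factor $\Pi_k^+$ is, and its eigenvalue on $B^{(1)}$ is precisely $\prod_{k\in s} P_k^{(1)}$ (the pairs $k\notin s$ contribute $\tr{B_k^{(1)}} = 1$). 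The standard fact that the measured expectation of an observable diagonal in the measurement basis equals its trace against the state then yields $\E{\prod_{k\in s} P_k^{(1)}} = \tr{O_s\, \rho^{\otimes 2}}$; the $|s|=1$ instance of this is exactly Eq.~\ref{eq:purtiy-from-Bell}.

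Finally I would expand $O_s$ and apply the SWAP trick term by term. For each $\alpha \subseteq s$, the operator $\mathbb{F}_\alpha$ swaps only the qubits labelled by $\alpha$ between the two copies, so tracing out the untouched registers gives $\tr{\mathbb{F}_\alpha\, \rho^{\otimes 2}} = \tr{\rho_\alpha^2}$, the purity of the reduced state on $\alpha$; the $\alpha = \emptyset$ term gives $\tr{\id\,\rho^{\otimes 2}} = 1 = \tr{\rho_\emptyset^2}$, consistent with the convention in Def.~\ref{def:CE}. Summing over $\alpha \in \PC(s)$,
\begin{equation}
\E{\prod_{k\in s} P_k^{(1)}} = \frac{1}{2^{|s|}} \sum_{\alpha\in\PC(s)} \tr{\rho_\alpha^2},
\end{equation}
and substituting back into $\E{\hat{\mathcal{C}}_{\ket\psi}(s)} = 1 - \E{\prod_{k\in s} P_k^{(1)}}$ reproduces Eq.~\eqref{eq:CE}.

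The only genuinely delicate point is the middle step: justifying that the expectation of a \emph{product} of per-pair random variables collapses to the single trace $\tr{O_s\, \rho^{\otimes 2}}$. This is where two structural facts must be combined carefully — that $\Pi_k^+$ and $B_k^{(m)}$ are simultaneously diagonal (so the product of local eigenvalues is the eigenvalue of the tensor-product observable $O_s$), and that the product Bell measurement on $\rho^{\otimes 2}$ realizes exactly the distribution against which $O_s$ is averaged. Everything else — linearity over rounds, the binomial expansion of $\prod_{k\in s}\Pi_k^+$, and the SWAP trick — is routine.
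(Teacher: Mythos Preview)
Your proposal is correct and follows essentially the same route as the paper: reduce to a single round by linearity, identify the per-round product with the eigenvalue of $\bigotimes_{k\in s}\Pi_k^+$ on the Bell outcome, conclude that its expectation is $\tr{\bigotimes_{k\in s}\Pi_k^+\,\rho^{\otimes 2}}$, and then expand and apply the swap trick. The only cosmetic difference is that the paper phrases the middle step via a bit-string/Kronecker-delta argument (noting $\prod_{k\in s}\tfrac{1+(-1)^{z_k}}{2}=\delta_{\boldsymbol z_s,\boldsymbol 0}$) rather than your ``observable diagonal in the measurement basis'' formulation, but the mathematical content is identical.
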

This theorem implies that the circuit in Fig. \ref{fig:SWAPn}b can be completely simulated by a projective Bell basis measurement on two copies of a state of interest. Many well-known entanglement measures can be estimated using this result. By letting $s=\{j\}$, one obtains an estimate of $\frac{1}{2}(1-\tr{\rho_j^2})$, which, when averaged over all $j \in \SC$ yields the entanglement measure from Refs. \cite{meyer2002global,brennen2003observable}. At the opposite extreme, when $s= \SC$, one obtains a CE which is related to the generalized concurrence $c_n(\ket{\psi})$, as defined in Ref. \cite{carvalho2004decoherence,aolita2006measuring}, via the simple formula $\CC_{\ket{\psi}}(\SC) = c_n(\ket{\psi})^2/4$. This realization implies that the entanglement measure being explored in Ref. \cite{foulds2021controlled} was exactly the generalized concurrence as defined in Ref. \cite{carvalho2004decoherence}. In between these two extremes, many other well-defined measures of multipartite entanglement can be estimated, \textit{all from the same measurement data.}    

There is still more one can learn from Bell basis measurement data, however. For instance, we can state a very similar theorem for the $n$-tangle, another well-studied multipartite entanglement measure \cite{wong2001potential}.

\begin{theorem}
The quantity 
\begin{align}
    \hat{\tau}_{(n)} &= \frac{2^n}{M}\sum_{m=1}^M \prod_{k=1}^{n} \left(\frac{1-\tr{\mathbb{F}_k B^{(m)}_k}}{2} \right),
\end{align}
is an unbiased estimator of the $n$-tangle. That is,
\begin{align}
    \E{\hat{\tau}_{(n)}} &= \tau_{(n)},
\end{align}
where the expectation value is with respect to the probability distribution induced by the Bell basis measurements. 
\end{theorem}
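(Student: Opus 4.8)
The plan is to recognize each factor in the product as the indicator of a single Bell outcome, reduce the expectation to one projection probability, and then evaluate that probability as a spin-flip overlap. Recall that the SWAP operator $\mathbb{F}_k$ has eigenvalue $+1$ on the three-dimensional symmetric (triplet) subspace and eigenvalue $-1$ only on the antisymmetric singlet $\ket{\Psi^-}$. Hence $\tr{\mathbb{F}_k B_k^{(m)}} = -1$ precisely when the $k$-th test-copy pair is projected onto $\ket{\Psi^-}$ and $+1$ otherwise, so that $\tfrac{1}{2}\big(1 - \tr{\mathbb{F}_k B_k^{(m)}}\big)$ equals $1$ if the outcome on pair $k$ is the singlet and $0$ on any triplet. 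The product over $k$ is therefore the indicator of the event that all $n$ pairs are simultaneously projected onto $\ket{\Psi^-}$ in round $m$.

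Since the Bell measurements on distinct pairs act on disjoint qubits and thus commute, the product of indicators is the indicator of the joint all-singlet event, and by linearity of expectation I would write, with $\Pi = \bigotimes_{k=1}^n \ketbra{\Psi^-}{\Psi^-}_k$ the projector onto this outcome across the two copies,
\begin{align}
\E{\hat{\tau}_{(n)}} &= 2^n\, p, \qquad p = (\bra{\psi}\ot\bra{\psi})\,\Pi\,(\ket{\psi}\ot\ket{\psi}).
\end{align}
Every round carries the same distribution with single-round expectation $p$, so the average over $M$ rounds leaves $\E{\hat\tau_{(n)}}=2^n p$. It then suffices to show $p = \tau_{(n)}/2^n$.

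The core computation is the single amplitude $\mathcal{A} = \big(\bigotimes_{k=1}^n \bra{\Psi^-}_k\big)\,(\ket{\psi}\ot\ket{\psi})$, for which $p = |\mathcal{A}|^2$ since the all-singlet state is normalized. Writing the singlet as $\ket{\Psi^-} = \tfrac{1}{\sqrt 2}\sum_{i,j}(i\sigma_y)_{ij}\ket{i}\ket{j}$, expanding $\ket{\psi} = \sum_{\vec x}\psi_{\vec x}\ket{\vec x}$ in the computational basis, and contracting each $x_k$ against its partner index, the $n$ factors tensor into $(i\sigma_y)^{\otimes n} = i^n \sigma_y^{\otimes n}$ applied between two copies of the coefficient vector. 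This yields
\begin{align}
\mathcal{A} &= \frac{i^n}{2^{n/2}}\, \psi^{\mathrm{T}}\,\sigma_y^{\otimes n}\,\psi = \frac{i^n}{2^{n/2}}\,\bra{\psi^*}\sigma_y^{\otimes n}\ket{\psi},
\end{align}
where the transpose of the coefficient vector is exactly $\bra{\psi^*}$. Taking the modulus squared gives $p = 2^{-n}\,|\bra{\psi^*}\sigma_y^{\otimes n}\ket{\psi}|^2 = 2^{-n}\tau_{(n)}$, with $\tau_{(n)} = |\bra{\psi}\sigma_y^{\otimes n}\ket{\psi^*}|^2$ the $n$-tangle of Ref.~\cite{wong2001potential} (the two forms agreeing for even $n$). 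Combining with the previous display proves $\E{\hat{\tau}_{(n)}} = \tau_{(n)}$.

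The main obstacle is the bookkeeping in the amplitude step: one must correctly identify the singlet's coefficient matrix with $i\sigma_y$, carry the global phase $i^n$ and normalization $2^{-n/2}$ through the tensor product, and note that contracting $\sigma_y^{\otimes n}$ against two untransposed copies of $\psi$ produces a transpose rather than a conjugate, so that the antilinear spin-flip structure $\ket{\psi}\mapsto \sigma_y^{\otimes n}\ket{\psi^*}$ emerges. Both the phase and the distinction between $\bra{\psi^*}\sigma_y^{\otimes n}\ket{\psi}$ and $\bra{\psi}\sigma_y^{\otimes n}\ket{\psi^*}$ drop out once the modulus is taken, and the prefactor $2^n$ in the estimator is precisely what cancels the singlet normalizations. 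As a consistency check, for odd $n$ the antisymmetry $(\sigma_y^{\otimes n})^{\mathrm{T}} = -\sigma_y^{\otimes n}$ forces the quadratic form $\psi^{\mathrm{T}}\sigma_y^{\otimes n}\psi$, and hence both $p$ and $\tau_{(n)}$, to vanish, so the identity holds trivially.
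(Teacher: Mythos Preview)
Your proof is correct, and it proceeds along a genuinely different line from the paper's. Both arguments begin identically: each factor $\tfrac{1}{2}(1-\tr{\mathbb{F}_k B_k^{(m)}})$ is recognized as the indicator that pair $k$ lands in the singlet, so the expectation reduces to $2^n$ times the all-singlet probability $p$. The paper then evaluates $p = \tr{\bigotimes_k \tfrac{\mathbb{I}_k-\mathbb{F}_k}{2}\,\rho^{\otimes 2}}$ by inserting the Pauli decomposition $\mathbb{F}_k=\tfrac{1}{2}\sum_i \sigma_{i_k}$, expanding the result in $n$-qubit Stokes parameters $S_{i_1\cdots i_n}$, and invoking the external identity $S^2_{(n)}=\tau_{(n)}$ from Ref.~\cite{jaeger2003entanglement}. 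You instead compute the single amplitude $\mathcal{A}=\bigl(\bigotimes_k\bra{\Psi^-}_k\bigr)\ket{\psi}\ket{\psi}$ directly in the computational basis, using the singlet's coefficient matrix $i\sigma_y$ to obtain $\mathcal{A}=i^n 2^{-n/2}\,\psi^{\mathrm T}\sigma_y^{\otimes n}\psi$ and hence $p=2^{-n}\tau_{(n)}$ from the definition of the $n$-tangle. Your route is more self-contained---it bypasses the Stokes-parameter machinery and the cited identity entirely---and it makes the antilinear spin-flip structure $\ket{\psi}\mapsto\sigma_y^{\otimes n}\ket{\psi^*}$ explicit, whereas the paper's approach keeps everything at the operator level and connects more transparently to the Pauli expansion used elsewhere for the CE. Your odd-$n$ consistency check via the antisymmetry of $\sigma_y^{\otimes n}$ is a nice addition not present in the paper.
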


\begin{figure}[t!]
\centering
\includegraphics[width=0.9\columnwidth]{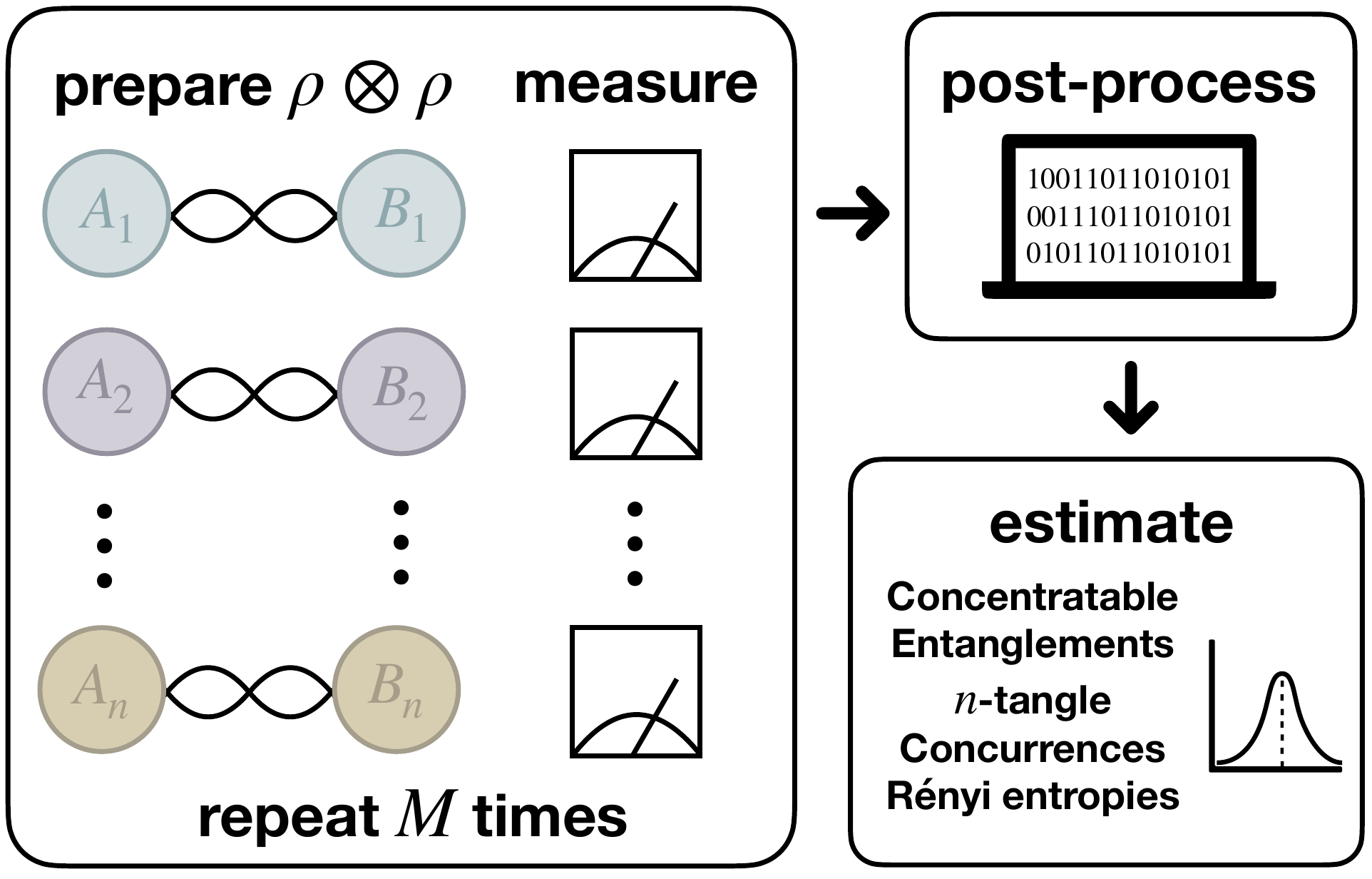}
\caption{\textbf{Bell Basis Estimation Method.} Experimentally, one first prepares the test state and an, ideally identical, copy state. We denote the composite state $\rho \otimes \rho$. Then the $k$-th subsystems in the test and copy states are entangled using native one- and two-qubit gates. This converts a computational basis measurement to a Bell basis measurement. The data from $M$ rounds of this procedure is stored in classical memory which one can then post-process in a number of ways to obtain many entanglement measures of interest.}
\label{fig:bellBasis}
\end{figure} 

With this theorem, we have recovered all of the measures shown in Ref. \cite{beckey2021computable} to be computable with the parallelized c-SWAP test. In addition to requiring fewer experimental resources, it is simple to determine how many rounds of Bell basis measurements are needed to achieve an $\epsilon$-close approximation of the estimators we have introduced. We formalize this statement in the following proposition, the proof of which follows directly from Hoeffding's inequality from classical statistics. 

\begin{proposition}\label{prop:confidence-intervals}
    Let $\epsilon, \delta > 0$ and $M=\Theta \left({\frac{\log{1/\delta}}{\epsilon^2}}\right)$. Further, let $\theta \in \{\mathcal{C}_{\ket{\psi}}(s), \tau_{(n)}\}$ and let $\hat{\theta}$ denote the corresponding estimator for $\theta$. Then we have 
    \begin{align}
        \left|  \hat{\theta} - \theta \right| < \epsilon,
    \end{align}
    with probability at least $1-\delta$.
\end{proposition}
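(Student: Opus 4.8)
The plan is to recognize each estimator as an empirical mean of $M$ independent, identically distributed, bounded random variables, and then invoke Hoeffding's inequality. First I would note that because each of the $M$ rounds acts on a fresh pair of copies with independent measurements, the per-round quantities are i.i.d.\ across $m \in \{1,\dots,M\}$, so the estimators are (affine functions of) sample means of i.i.d.\ variables whose expectations are exactly $\theta$ by the two preceding theorems. The only remaining ingredients are the range of the summands and a single application of Hoeffding.

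For the CE, write the per-round summand $X_m = \prod_{k\in s}\big(\frac{1+\tr{\mathbb{F}_k B^{(m)}_k}}{2}\big)$. The key boundedness observation is that, since the Bell states are $\pm 1$ eigenstates of SWAP, each factor $\frac{1+\tr{\mathbb{F}_k B^{(m)}_k}}{2}$ takes values in $\{0,1\}$, hence $X_m\in\{0,1\}\subseteq[0,1]$ and $\hat{\mathcal{C}}_{\ket{\psi}}(s)=1-\frac{1}{M}\sum_m X_m$. Applying Hoeffding with range $b-a=1$ gives $P\big(|\hat{\mathcal{C}}_{\ket{\psi}}(s)-\mathcal{C}_{\ket{\psi}}(s)|\ge\epsilon\big)\le 2\exp(-2M\epsilon^2)$, and setting the right-hand side to $\delta$ and solving yields $M=\frac{1}{2\epsilon^2}\ln(2/\delta)=\Theta(\log(1/\delta)/\epsilon^2)$. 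For the $n$-tangle, the per-round summand $2^n\prod_{k=1}^n\big(\frac{1-\tr{\mathbb{F}_k B^{(m)}_k}}{2}\big)$ lies in $\{0,2^n\}$ by the same eigenvalue argument, so Hoeffding with range $2^n$ gives $P\big(|\hat{\tau}_{(n)}-\tau_{(n)}|\ge\epsilon\big)\le 2\exp(-2M\epsilon^2/4^n)$, again of the form $\Theta(\log(1/\delta)/\epsilon^2)$ in $\epsilon$ and $\delta$.

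The hard part is not the concentration inequality itself but the boundedness and independence bookkeeping. In particular, I would flag that the per-round range for the $n$-tangle is $2^n$, so its sample complexity carries a hidden $4^n$ prefactor; the clean $n$-independent statement holds verbatim for the CE, and for the tangle should be read as absorbing this exponential into the $\Theta$ (i.e.\ treating $n$ as fixed). Once the $\{0,1\}$ (respectively $\{0,2^n\}$) support is established from the SWAP eigenvalue relation and the i.i.d.\ property from the independence of the measurement rounds, the bound on $M$ follows immediately from Hoeffding, completing the proof.
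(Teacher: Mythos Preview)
Your proposal is correct and follows the same route as the paper: identify each estimator as a sample mean of i.i.d.\ bounded per-round variables and apply Hoeffding's inequality to obtain $M=\frac{1}{2\epsilon^2}\log(2/\delta)=\Theta(\log(1/\delta)/\epsilon^2)$. You are in fact more careful than the paper on one point: the paper justifies taking $b_i-a_i=1$ with the remark ``our quantities satisfy $0\leq\theta\leq 1$,'' which conflates the range of $\theta$ with the range of the per-round summands; for the $n$-tangle the latter is $\{0,2^n\}$, and the $4^n$ prefactor you flag is the honest consequence of applying Hoeffding correctly to that estimator (the paper's stated bound thus holds as written only for the CE, with the tangle case absorbing the $n$-dependence into the $\Theta$ exactly as you note).
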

This result, while simple and analytical, does not take into account the underlying probability distribution, and is thus not as tight as it could be. As we show in Fig. \ref{fig:BellMeasurements}(b), using information about the underlying distribution, one finds numerically that Prop.~\ref{prop:confidence-intervals} often leads to overestimates on the number of measurements needed to obtain $\epsilon$-close estimates of the quantities of interest.

Thus far, we have only considered estimating these measures given two identical copies of a \textit{pure} quantum state. In order for these methods to be truly useful on today's hardware, we must extend to the measures to mixed states.

\textit{CE for mixed states.}
The standard method of extending pure state entanglement measures to mixed states is a so-called \textit{convex-roof extension} \cite{bennett1996mixed,uhlmann2010roofs}

\begin{align}\label{eq:mixed-CE}
    \CC_{\rho}(s) &= \inf \sum_i p_i \CC_{\ket{\psi_i}}(s),
\end{align}
where the infimum is over the set of decompositions of the form $\rho=\sum_i p_i \ket{\psi_i}\bra{\psi_i}$, with $\sum_i p_i =1$. Because this optimization is generally difficult, we would like to avoid it. An alternative method is to find lower bounds for the mixed state CEs that depend only on Bell basis measurement outcomes. This allows one to bound the mixed state entanglement within the above framework developed for estimating pure state entanglement. 

We will construct the lower bounds on $\CC_{\rho}(s)$ using the relationship between CEs and the bipartite concurrences $c_{\alpha}(\ket{\psi})$ \cite{wootters2001entanglement}, as well as a known lower bound for the mixed state bipartite concurrence \cite{mintert2007observable}. Specifically, any CE can be expressed in terms of bipartite concurrences as
\begin{align}
    \CC_{\ket{\psi}} (s) = \frac{1}{2^{|s|+1}} \sum_{\alpha} c_{\alpha}^2(\ket{\psi}),
\end{align}
where $c_{\alpha}(\ket{\psi}) := \sqrt{2(1-\tr{\rho_{\alpha}^2})}$. Then, because we can use the known lower bound for each bipartite concurrence in the sum, we can construct a lower bound for any CE of interest. This is a generalization of the method used in Ref. \cite{aolita2008scalable} in which the authors derive a lower bound on the mixed state multipartite concurrence. For example, the lower bound on $\CC_{\rho}(\SC)$ takes the form
\begin{align}
    \CC_{\rho}^{\ell}(\SC) = \frac{1}{2^n} + (1-\frac{1}{2^n})\tr{\rho^2} - \frac{1}{2^n}\sum_{\alpha \in \PC (\SC)} \tr{\rho_{\alpha}^2}.
\end{align}
Because each term in this expression can be directly estimated from Bell basis measurement data, it allows one to quantify mixed state entanglement in the same framework developed above for pure state entanglement. We further note that, for high-purity states that are common in today's state of the art experiments, this bound is very close to the pure state theoretical value, as shown in Fig.~\ref{fig:BellMeasurements}. This can be seen by noting that $C_{\ket{\psi}}(\SC) - \CC_{\rho}^{\ell}(\SC) = (1-2^{-n})(1-\tr{\rho^2}),$ which is very close to zero for nearly pure states \cite{seeSupplementary}. With these bounds in place, we turn to demonstrating the viability of our proposed scheme via realistic Rydberg system simulations.

\textit{Rydberg atom simulations with noise.}
In Fig. \ref{fig:BellvsSWAP}(a) we illustrate the architectures that we propose for quantifying the CE using the c-SWAP test and the Bell basis measurement method in neutral atom systems. The c-SWAP circuit is implemented by arranging each group of atomic qubits $\{A_k,B_k,C_k\}$ in an equilateral triangle, in such a way that $CZ$ and $CCZ$ gates can be realized using the Rydberg pulse sequences described in \cite{pelegri2022high}. These global unitaries are then transformed to CNOT and Toffoli gates through the application of Hadamard gates to the target qubit before and after the Rydberg pulses. The Bell basis measurements are performed by applying Hadamard and CNOT gates to the relevant pairs of qubits $\{A_k,B_k\}$ and then measuring in the computational basis. We model the presence of experimental imperfections by substituting the ideal $CZ$ and $CCZ$ gates by non-unitary transformations~\cite{seeSupplementary}. The application of these imperfect gates on pure states results in phase errors and loss of norm, which mimics the leakage of population outside of the computational basis under the application of the Rydberg pulses. Since occurences of leakage can be detected and discarded in the post-processing of the experimental data, we re-normalize the state resulting from the application of the non-unitary gates before computing its CE. We keep track of the loss of norm for the purpose of estimating the number of repetitions required to achieve a desired accuracy. For simplicity of notation, we denote the CE computed over renormalized pure states as $\mathcal{C}(\mathcal{S})$.
\begin{figure}[t!]
\centering
\includegraphics[width=0.9\columnwidth]{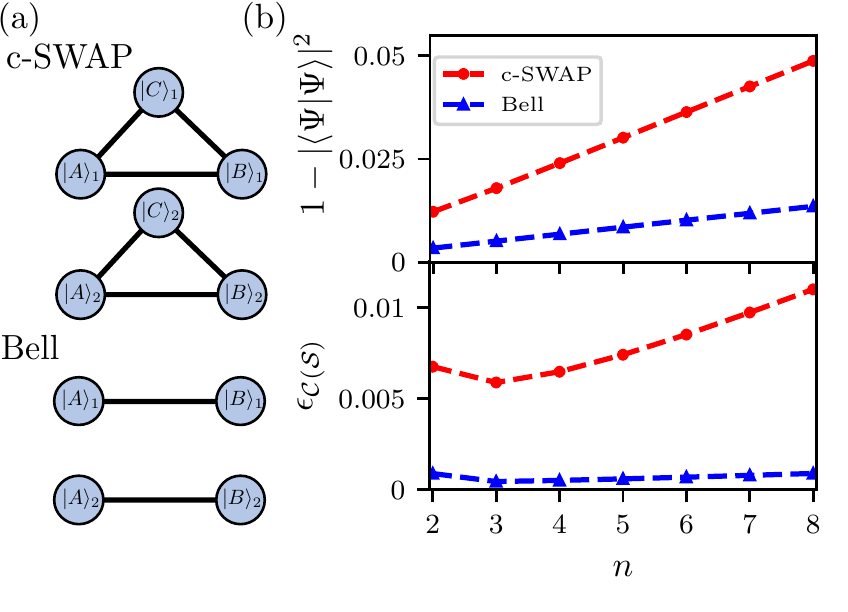}
\caption{\textbf{c-SWAP vs. Bell Basis Method.} a) Pictorial representation of the geometry Rydberg atoms would be placed in to implement either the c-SWAP or Bell methods for a two-qubit state example. b) Top panel indicates the loss of norm due to imperfect Rydberg pulses. Bottom panel shows the error that this causes in estimation of the CE of a GHZ state. In both cases, we see the Bell basis method outperforms the c-SWAP.}
\label{fig:BellvsSWAP}
\end{figure} 

The Bell measurement method offers a substantial practical advantage with respect to the c-SWAP test for estimating $\mathcal{C}(\mathcal{S})$ due to its reduced requirement on the number of copies and its significantly lower total gate count. To illustrate this, in Fig.~\ref{fig:BellvsSWAP} we compare the results obtained when measuring with both methods the CE for an $n$-qubit Greenberger–Horne–Zeilinger (GHZ) state ${\ket{\mathrm{GHZ}_n}=\frac{1}{\sqrt{2}}\left(\ket{0}^{\otimes n}+\ket{1}^{\otimes n}\right)}$. In the lower plot of Fig.~\ref{fig:BellvsSWAP}(b) we show the relative discrepancy $\epsilon_{\CC(\SC)}$ between the value of $\CC(\SC)$ obtained with each method and the analytical result \cite{foulds2021controlled,beckey2021computable} as a function of $n$. We observe that for all numbers of qubits the Bell measurement method yields more accurate results than the c-SWAP test due to the reduction in accumulated phase errors. The upper plot of Fig.~\ref{fig:BellvsSWAP}(b) shows that the loss of norm is smaller for the Bell measurement method than for the c-SWAP test, meaning that the former method would require fewer repetitions to achieve a given level of accuracy than the latter.

\begin{figure}[t!]
\centering
\includegraphics[width=0.9\columnwidth]{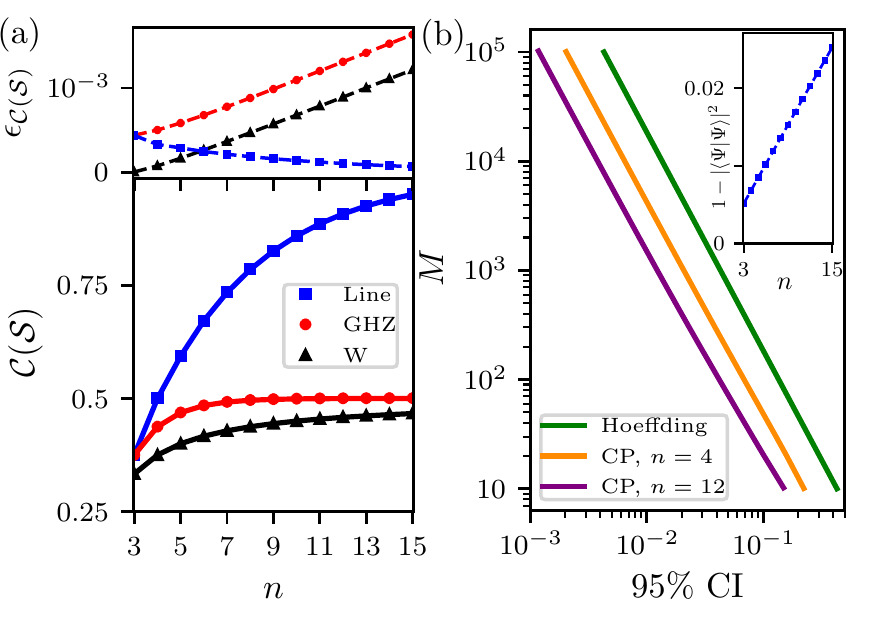}
\caption{\textbf{Bell basis method with realistic Rydberg gates} a) Bottom panel: CE of GHZ, W and Line states, with solid lines indicating theoretical values and dots representing the results obtained with noisy Rydberg gates. Top panel: relative discrepancy between the theoretical and simulated values of the CE. b) Number of measurements required as a function of the desired size of the $95\%$ CI on the precision of the estimation of the CE for a Line state with different numbers of qubits. The inset shows the loss of norm for this state when the CE is estimated with the Bell basis method.}
\label{fig:BellMeasurements}
\end{figure} 

Having established the superiority of the Bell measurement method in the presence of experimental imperfections, we turn to investigating its performance for estimating $\mathcal{C}(\mathcal{S})$ for different classes of highly entangled states. In the lower plot of Fig.~\ref{fig:BellMeasurements}(a) we show the theoretical (solid lines) and simulated experimental (dots) values of $\CC(\SC)$ as a function of the number of qubits $n$ for GHZ, W and Line states (all of which admit analytical formulas which are given in the Supplementary Materials). We observe that values of $\CC(\SC)$ remain clearly distinguishable between the three states up to $n=15$. Furthermore, as illustrated in the upper plot the relative discrepancy between the theoretical and simulated values remains $\epsilon_{\CC(\SC)}\lesssim 10^{-3}$ for the range of $n$ and states considered. In Fig.~\ref{fig:BellMeasurements}b we show the size of the $95\%$ Confidence Interval (CI) in the Maximum Likelihood Estimation of the $\mathcal{C}(\mathcal{S})$ for a Line state of $n=4,12$ qubits computed with the Clopper-Pearson (CP) method as a function of the total number of measurements $M$, as well as the bound provided by Hoeffding's inequality. The CP method predicts a lower requirement in the number of measurements to achieve a given size of the CI because it is tailored to the binomial probability distribution that governs the statistics of $\mathcal{C}(\mathcal{S})$ measurements, but Hoeffding's inequality provides a useful bound which is easy to compute analytically. The inset of Fig.~\ref{fig:BellMeasurements}(b) shows the loss of norm as a function of the number of qubits. Even for $n=15$ the norm of the state remains $|\braket{\Psi}{\Psi}|^2\sim 0.98$, meaning that the number of experiment repetitions would only need to be increased by $\lesssim 2\%$ to make up for the leakage outside of the computational basis.

\textit{Conclusion.} 
We have shown how to estimate the CEs and $n$-tangle from Bell basis measurement data. We extended the definition of the CEs to mixed states and showed how to estimate lower bounds on the mixed state CE from Bell basis measurement data. Our methods simultaneously make these measures more experimentally accessible, while also simplifying their associated theoretical analysis.

An interesting direction for future work would be to compare, in terms of both theoretical sample complexity and performance on real hardware, this Bell basis method to local randomized measurements~\cite{vanenk2012measuring,elben2019statistical,brydges2019probing,ohnemus2022quantifying,notarnicola2021randomized,rath2021importance} and classical shadows~\cite{huang2020predicting,elben2020mixed}, two modern techniques that have been applied to the study of other entanglement measures.

\section*{Acknowledgments}
JLB was initially supported by the National Science Foundation Graduate Research Fellowship under Grant No.~1650115 and was partially supported by NSF grant 1915407. This material is based upon work supported by the U.S. Department of Energy, Office of Science, National Quantum Information Science Research Centers, Quantum Systems Accelerator. JLB also acknowledges helpful discussions with Hsin-Yuan (Robert) Huang, Michael Walter, Graeme Smith, Guangkuo Liu, and Louis Schatzki. GP and NJP are supported by the EPSRC (Grant No.~EP/T005386/1) and M Squared Lasers Ltd. SF is supported by a UK EPSRC funded DTG studentship (ref 2210204) and thanks Tim Spiller and Viv Kendon for many useful discussions. GP and NJP acknowledge fruitful discussions with Jonathan Pritchard and Andrew Daley.  

\let\oldaddcontentsline\addcontentsline
\renewcommand{\addcontentsline}[3]{}
\bibliography{references.bib} 
\let\addcontentsline\oldaddcontentsline

\onecolumngrid
\setcounter{section}{0}
\setcounter{proposition}{0}
\setcounter{theorem}{0}
\setcounter{corollary}{0}
\setcounter{figure}{0}
\renewcommand{\figurename}{Sup. Fig.}

\newpage
\begin{center}
    \textbf{Supplementary material for \textit{Multipartite entanglement measures via Bell basis measurements}}
\end{center}
In this Supplemental Information, we provide additional details for the manuscript \textit{Multipartite entanglement measures via Bell basis measurements}. First, in Section~\ref{sec:prelims} we present a pedagogical review of elements of quantum information theory necessary to understand our work. We proceed in Section~\ref{sec:proofs} with the proofs and derivations of the main results of our manuscript. Then, in Section \ref{sec:simulations} we provide more details on the simulation data that appear in the manuscript.
\tableofcontents 

\section{Preliminaries}\label{sec:prelims}
We begin by introducing the pure state entanglement measures that appear in the main text. Our main results involve the construction of unbiased estimators of these measures using only the outcomes of Bell basis measurements. For more background on entanglement and various methods of its quantification see Ref. \cite{horodecki2009quantum} and references therein.

\subsection{Multipartite entanglement measures}
\subsubsection{Concentratable entanglements }
Through a detailed numerical investigation of the so-called parallelized controlled SWAP test (see main text for details), the authors in Ref. \cite{foulds2021controlled} conjectured that the outcomes could be used to construct a pure states entanglement monotone. The authors of Ref. \cite{beckey2021computable} proved that the circuit can be used to produce a whole family of pure state entanglement monotones -- dubbed the \textit{concentratable entanglements} (CEs). Before reproducing Def. 1 from the main text, note that throughout this supplementary material, we let $\ket{\psi} \in (\mathbb{C}^2)^{\otimes n}$ be an $n$-qubit pure quantum state. Further, we denote the set of qubit labels within $\ket{\psi}$ as $\SC=\{1, 2, \ldots, n\}$, and $\PC(\SC)$ as its power set (i.e., the set of subsets, with cardinality $|\SC| =2^n$).
\begin{definition}[Concentratable entanglements \cite{beckey2021computable}]
For any non-empty set of qubit labels $s\in \PC(\SC) \setminus \{\emptyset\}$, the Concentratable Entanglement is defined as 
\begin{align}\label{eq:CE}
    \CC_{\ket{\psi}}(s) &= 1 -\frac{1}{2^{|s|}} \sum_{\alpha \in \PC(s)} \tr{\rho_{\alpha}^2},
\end{align}
where the $\rho_{\alpha}$'s are reduced states of $\ket{\psi}\bra{\psi}$ obtained by tracing out subsystems with labels not in $\alpha$. For the trivial subset, we take $\tr{\rho_{\emptyset}^2}:=1$.
\end{definition}
Many well-known measures are recovered as special cases of the CEs. Moreover, as mentioned in the main text, one can compute the CEs using the outcomes of the parallelized c-SWAP test. This, and the other interested properties outlined in Ref. \cite{beckey2021computable}, make the CEs and interest family of entanglement measures to study. 
\subsubsection{Generalized concurrences}
In Ref. \cite{wootters2001entanglement}, Wooters introduced the, now well-known entanglement monotone called the \textit{concurrence}. For pure bipartite quantum states, $\rho_{AB}$, can be compactly expressed as
\begin{align}
    c_2 (\rho_{AB}) &= \sqrt{2(1-\tr{\rho_A^2})},
\end{align}
where we could have equivalently used $\rho_B$ because $\tr{\rho_A^2} =\tr{\rho_B^2}$ for pure states (this follows directly from the Schmidt decomposition \cite{nielsen2000quantum}). By design, $0 \leq C_2(\rho)\leq 1$ with the lower bound being saturated by separable product states and the upper bound being saturated by the Bell states.

There are many ways in which one could generalize Wooters' concurrence to multipartite systems. Ref. \cite{carvalho2004decoherence} explores many different generalizations to Wooters' concurrence, but they focus on the following form, which we will herein refer to as \textit{the} generalized concurrence.
\begin{definition}[Generalized concurrence]\label{def:gen-concurrence}
\begin{align}
    c_n(\ket{\psi}) &= 2^{1-\frac{n}{2}}\sqrt{(2^n-2) - \sum_{\alpha}\tr{\rho_{\alpha}^2} },
\end{align} where the sum is over all $2^n -2$ non-trivial subsets of the $n$-qubit state. That is, they omit the empty set and the full set from the power set. 
\end{definition}
Note that the authors of Ref. \cite{carvalho2004decoherence} claim that Greenberger-Horne-Zeilinger ($GHZ$) states maximize this measure. However, as we will see below, this turns out to be false. Also note that when $s=\SC$, the CE and the generalized concurrence are related by the simple expression
    \begin{align}
        c_n (\ket{\psi}) &= 2 \sqrt{\mathcal{C}_{\ket{\psi}}}.
    \end{align}
   To see this, observe
    \begin{align}
         c_n (\ket{\psi}) &= 2^{1-\frac{n}{2}} \sqrt{\left((2^n - 2) \braket{\psi}{\psi}^2 - \sum_{i} \tr{\rho_i^2}\right)},\\
        c_n^2(\ket{\psi}) &= (2^{1-\frac{n}{2}})^2 \left((2^n - 2) - \sum_{i} \tr{\rho_i^2}\right),\\
        &= \frac{4}{2^n}\left((2^n - 2) - \sum_{i} \tr{\rho_i^2}\right),\\
        &= 4- \frac{8}{2^n} - \frac{4}{2^n}\sum_{i} \tr{\rho_i^2},\\
        &= 4 \left(1 - \frac{1}{2^n}\left( 2 + \sum_{i} \tr{\rho_i^2}\right)\right),\\
        &= 4 \left(1 - \frac{1}{2^n}\left( \tr{\rho_{\emptyset}^2} + \tr{\rho^2} + \sum_{i} \tr{\rho_i^2}\right)\right),\\
       &= 4\left(1-\frac{1}{2^n} \sum_{\alpha \in \mathcal{P}(\mathcal{S})} \tr{\rho_{\alpha}^2}\right),\\
        \implies c_n (\ket{\psi}) &= 2 \sqrt{\mathcal{C}_{\ket{\psi}}(\mathcal{S})},
    \end{align}
    as desired.

\subsubsection{$n$-tangle}
The $n$-tangle is the well-studied pure state entanglement monotone \cite{wong2001potential} defined as follows.
\begin{definition}[$n$-tangle] Let $\ket{\psi} \in (\mathbb{C}^2)^{\otimes n}$. The $n$-tangle is defined as 
\begin{align}
    \tau_{(n)} &= |\braket{\psi}{\tilde{\psi}}|^2,
\end{align}
where $\ket{\tilde{\psi}} := \sigma_2^{\otimes n} \ket{\psi^{*}}$ and the $``*"$ denotes complex conjugation. 
\end{definition}

It was shown in Ref. \cite{jaeger2003entanglement} that the following entanglement measure is equivalent to the $n$-tangle for pure state inputs
\begin{align}
    S_{(n)}^2 := \frac{1}{2^n} \left((S_{0\dots 0})^2 - \sum_{k=1}^n \sum_{i_k}^3 (S_{0\dots i_k \dots 0})^2 + \sum_{k,l=1}^n \sum_{i_k, i_l =1}^3 (S_{0\dots i_k \dots i_l \dots 0})^2 - \dotsm + (-1)^n \sum_{i_1, \dots, i_n}^3 (S_{i_1 \dots i_n})^2\right),
\end{align}
where $S_{i_1,\dots,i_n}=\tr{\rho \sigma_{i_1}\otimes \dotsm \otimes \sigma_{i_n}}$ for $i_1,\dots,i_n \in \{ 0,1,2,3 \}$ are the so-called $n$-qubit Stokes parameters. We use the fact that $S^2_{(n)}=\tau_{(n)}$ for pure state inputs below. Now that we have introduced the entanglement measures we are interested in, we can proceed to some crucial facts regarding the SWAP operator. 

\subsection{Representations and properties of the n-qubit SWAP operator}
\subsubsection{Single-qubit SWAP operator}\label{sec:SWAP-operator}
Consider a Hilbert space of the form $\cal H \otimes \cal H$. Let $\{\ket{j}\}$ be an orthonormal basis of $\cal H$, so that $\mathcal{B} = \{\ket{j}\ket{j'}\}$ is an orthonormal product basis of $\cal H \otimes \cal H$. The single-qubit SWAP operator $\mathbb{F}:\cal H \otimes \cal H \rightarrow \cal H \otimes \cal H$ is defined by its action on the elements of $\mathcal{B}$:
\begin{align}\label{eq:SWAP}
    \mathbb{F}\,\ket{j}\ket{j'}=\ket{j'}\ket{j}\,\,\quad \forall\, \quad \ket{j}\ket{j'}\in \mathcal{B}. 
\end{align}

Next, recall that the Bell basis contains the following elements
\begin{align}
    \ket{\Phi^+} &=\frac{1}{\sqrt{2}}(\ket{0}\ket{0} + \ket{1}\ket{1}), \quad 
    \ket{\Psi^+} = \frac{1}{\sqrt{2}}(\ket{0}\ket{1} + \ket{1}\ket{0})
    ,\\
    \ket{\Phi^-} &= \frac{1}{\sqrt{2}}(\ket{0}\ket{0} - \ket{1}\ket{1}), \quad
    \ket{\Psi^-} = \frac{1}{\sqrt{2}}(\ket{0}\ket{1} - \ket{1}\ket{0}).
\end{align}
One way they can be obtained from the computational basis vectors is by applying a Hadamard and then a CNOT. Explicitly, this yields 
\begin{align}
    CNOT (H \otimes \mathbb{I}) \ket{0}\ket{0} &=CNOT\left( \frac{1}{\sqrt{2}}(\ket{0}\ket{0} + \ket{1}\ket{0})\right) =\frac{1}{\sqrt{2}}(\ket{0}\ket{0} + \ket{1}\ket{1}) = \ket{\Phi^+},\\
    CNOT (H \otimes \mathbb{I}) \ket{0}\ket{1} &=CNOT\left( \frac{1}{\sqrt{2}}(\ket{0}\ket{1} + \ket{1}\ket{1})\right) =\frac{1}{\sqrt{2}}(\ket{0}\ket{1} + \ket{1}\ket{0}) = \ket{\Psi^+} ,\\
    CNOT (H \otimes \mathbb{I}) \ket{1}\ket{0} &=CNOT\left( \frac{1}{\sqrt{2}}(\ket{0}\ket{0} - \ket{1}\ket{0})\right) =\frac{1}{\sqrt{2}}(\ket{0}\ket{0} - \ket{1}\ket{1}) = \ket{\Phi^-},\\
    CNOT (H \otimes \mathbb{I}) \ket{1}\ket{1} &=CNOT\left( \frac{1}{\sqrt{2}}(\ket{0}\ket{1} - \ket{1}\ket{1})\right) =\frac{1}{\sqrt{2}}(\ket{0}\ket{1} - \ket{1}\ket{0}) = \ket{\Psi^-} .
\end{align}
We note here the importance of the above relationships. Because the Bell basis is obtained via just a Hadamard and a CNOT, to carry out a Bell basis measurement, one can simply apply those gates to the relevant pairs of qubits and then do a standard computational basis measurement (see Sup. Fig. \ref{fig:bell_circuit}). This is experimentally feasible because most, if not all, gate-model quantum computers being built today are endowed with the ability to perform these operations. Thus, this method could be implemented on most hardware being built today. 

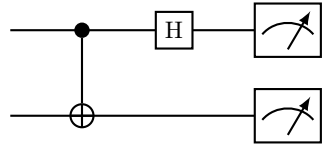
\begin{figure}[h!]\label{fig:bell_circuit}
    \centering
    \begin{tikzpicture}[thick]
        \tikzset{operator/.style = {draw,fill=white,minimum size=1.5em}, operator2/.style = {draw,fill=white,minimum height=2cm, minimum width=1cm}, phase/.style = {draw,fill,shape=circle,minimum size=5pt,inner sep=0pt}, phase2/.style = {draw,fill=black,shape=circle,minimum size=5pt,inner sep=0pt}, phase3/.style = {draw,fill,shape=circle,minimum size=5pt,inner sep=0pt}, surround/.style = {fill=black!10,thick,draw=black,rounded corners=2mm}, cross/.style={path picture={ \draw[thick,black](path picture bounding box.north) -- (path picture bounding box.south) (path picture bounding box.west) -- (path picture bounding box.east);}}, circlewc/.style={draw,circle,cross,minimum width=0.3 cm}, cross2/.style={path picture={ \draw[thick](path picture bounding box.north) -- (path picture bounding box.south) (path picture bounding box.west) -- (path picture bounding box.east);}}, circlewc2/.style={draw,color=black,circle,cross2,minimum width=0.3 cm}, cross3/.style={path picture={ \draw[thick](path picture bounding box.north) -- (path picture bounding box.south) (path picture bounding box.west) -- (path picture bounding box.east); }}, circlewc3/.style={draw,color=black,circle,cross3,minimum width=0.3 cm}, meter/.style= {draw, fill=white, inner sep=7, rectangle, font=\vphantom{A}, minimum width=25, line width=.8, path picture={\draw[black] ([shift={(.1,.3)}]path picture bounding box.south west) to[bend left=50] ([shift={(-.1,.3)}]path picture bounding box.south east);\draw[black,-latex] ([shift={(0,.1)}]path picture bounding box.south) -- ([shift={(.3,-.1)}]path picture bounding box.north);}},} \matrix[row sep=0.4cm, column sep=0.8cm] (circuit) {\node (q1) {};
        &\node[phase2] (P11) {}; 
        &\node[operator] (H12) {H};
        &\node[meter] (meter) {};
        \coordinate (end1);     
        \\ \node (q2) {}; 
        &\node[circlewc2] (P21) {}; 
        & &[-1.2cm] \node[meter] (meter) {};
        \coordinate (end2); \\};
        \begin{pgfonlayer}{background} \draw[thick] (q1) -- (end1) (q2) -- (end2); \draw[thick] (P11) -- (P21) 
        ;
        \end{pgfonlayer}
    \end{tikzpicture}
    \caption{\textbf{Bell basis measurement.} By applying a CNOT gate, followed by a Hadamard gate, one can convert a computational basis measurement to a Bell basis measurement.}
\end{figure}

Now we come to the reason the Bell basis measurement scheme can simulate the parallelized c-SWAP circuit: the Bell basis vectors are the eigenstates of the SWAP operator
\begin{align}
    \mathbb{F} \ket{\Phi^+} = \ket{\Phi^+}, \quad \mathbb{F} \ket{\Phi^-} = \ket{\Phi^-}, \quad 
    \mathbb{F} \ket{\Psi^+} = \ket{\Psi^+}, \quad \text{and} \quad
    \mathbb{F} \ket{\Psi^-} &= -\ket{\Psi^-}. 
\end{align}
Those with a positive eigenvalue are called \textit{triplet} states. The remaining state is called the \textit{singlet}. The eigenspace spanned by the triplet states is called the \textit{symmetric subspace} and the orthogonal complement, spanned by the singlet state, the \textit{antisymmetric subspace}.  The projectors onto these subspaces are given as
\begin{align}
    \Pi_+ &:= \ket{\Phi^+}\bra{\Phi^+} +\ket{\Phi^-}\bra{\Phi^-} + \ket{\Psi^+}\bra{\Psi^+}, \\
    \Pi_- &:= \ket{\Psi^-}\bra{\Psi^-}.
\end{align}
The swap operator can thus be represented as the difference of these two operators
\begin{align}
    \mathbb{F} &= \Pi_+ - \Pi_-.
\end{align}
Because a basis is complete, by definition, the sum of these projectors is the identity operator on the space of two qubits
\begin{align}
    \mathbb{I}\otimes \mathbb{I} &= \ket{\Phi^+}\bra{\Phi^+} +\ket{\Phi^-}\bra{\Phi^-} + \ket{\Psi^+}\bra{\Psi^+} + \ket{\Psi^-}\bra{\Psi^-}.
\end{align}
From these two expressions, it follows that the projectors can be expressed as
\begin{align}
    \Pi_+ = \frac{\mathbb{I}\otimes \mathbb{I} + \mathbb{F}}{2} \quad \text{ and } \quad \Pi_- = \frac{\mathbb{I}\otimes \mathbb{I} - \mathbb{F}}{2}.
\end{align}
As we will see with the $n$-qubit SWAP operator, it is usually necessary to specify which states are being acted upon by the SWAP operator. We thus denote the SWAP between the $k$-th test and copy qubits as $\mathbb{F}_k$. Although the Bell basis decomposition of the SWAP operator will be the one that is primarily used, another useful decomposition is given in the Pauli basis as

\begin{align}\label{eq:SWAPfromPauli}
    \mathbb{F}_k &= \frac{1}{2}\left(\sigma_{0_k} + \sigma_{1_k} + \sigma_{2_k} + \sigma_{3_k}\right),
\end{align}
where $\sigma_{i_k}$ represents the $i$-th Pauli matrix on both the test and copy qubits. Explicitly,

\begin{align}
    \sigma_{0_k} &:= \mathbb{I}_k \otimes \mathbb{I}_{k'}, \quad \sigma_{1_k} := X_k \otimes X_{k'}, \quad \sigma_{2_k} := Y_k \otimes Y_{k'}, \quad  \sigma_{3_k} := Z_k \otimes Z_{k'}.
\end{align}
We will use $\{\mathbb{I},X,Y,Z\}$ to represent the Pauli matrices unless $\sigma_{i_k}$ allows for more compact notation.

\subsubsection{The $n$-qubit SWAP operator}\label{sec:n-qubitSWAP}
To extend to the multi-qubit regime, we let the test and copy Hilbert spaces have a tensor product structure themselves. That is, let 
\begin{align}
    \mathcal{H} &= \bigotimes_{j=1}^n \mathcal{H}_j.
\end{align}
Further, denote the computational basis of this $n$-qubit space as
\begin{align}
    \mathcal{B} &= \{\ket{\boldsymbol{j}} = \bigotimes_{k=1}^n \ket{j_k}\},
\end{align}
where $j_k\in\{0,1\}$. Because we have a test and copy state, our full space will be $\mathcal{H}\otimes \mathcal{H}$ with basis $\{\ket{\boldsymbol{j}}\ket{\boldsymbol{j}'}\}$. The $n$-qubit SWAP operator acts on this basis as
\begin{align}
   \mathbb{F} \ket{\boldsymbol{j}}\ket{\boldsymbol{j}'}&= \ket{\boldsymbol{j}'}\ket{\boldsymbol{j}}
\end{align}
The crucial observation needed to handle $n$-qubits states is that the $n$-qubit SWAP operator can be written as the $n$-fold tensor product of single qubit SWAP operators
\begin{align}
    \label{eq:n-qubitSWAP}
    \mathbb{F}&= \bigotimes_{j=1}^n \mathbb{F}_j,
\end{align}
where $F_j : \mathcal{H}_j \otimes \mathcal{H}_{j'} \rightarrow \mathcal{H}_j \otimes \mathcal{H}_{j'}$ is the single qubit SWAP operator acting on the $j$-th qubits of the test and copy system. When it should be clear by context, we will simply denote the $n$-qubit SWAP operator as $\mathbb{F}$. We now state an important lemma upon which most methods of purity estimation rely.
\begin{lemma}[The swap ``trick''] \label{lemma:swap-trick}
For an $n$-qubit state $\rho$, the following equality holds
\begin{align}
    \tr{\mathbb{F}\rho^{\otimes 2}} &= \tr{\rho^2}.
\end{align}
This is commonly referred to as the swap ``trick.''
\end{lemma}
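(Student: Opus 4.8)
The plan is to prove the identity by a short, dimension-independent computation in a fixed orthonormal basis, using only the defining action $\mathbb{F}\ket{\boldsymbol{j}}\ket{\boldsymbol{j}'}=\ket{\boldsymbol{j}'}\ket{\boldsymbol{j}}$. The key simplification is to notice that, although $\mathbb{F}$ is presented as the $n$-qubit SWAP with the tensor decomposition of Eq.~\eqref{eq:n-qubitSWAP}, the swap ``trick'' is really a statement about the SWAP on an \emph{arbitrary} bipartite space $\mathcal{H}\otimes\mathcal{H}$. I would therefore treat $\mathcal{H}=\bigotimes_{j=1}^n\mathcal{H}_j$ as a single Hilbert space and never unpack the per-qubit structure, which keeps the bookkeeping minimal.

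First I would record the operator form of the SWAP. Fixing an orthonormal basis $\{\ket{a}\}$ of $\mathcal{H}$, one checks directly from the definition that
\begin{align}
    \mathbb{F} &= \sum_{a,b}\ketbra{a}{b}\otimes\ketbra{b}{a},
\end{align}
since the right-hand side sends $\ket{c}\ket{d}\mapsto\ket{d}\ket{c}$ by orthonormality. Substituting this into $\tr{\mathbb{F}\,\rho^{\otimes 2}}$ and using that the trace on $\mathcal{H}\otimes\mathcal{H}$ factorizes across the two tensor factors gives
\begin{align}
    \tr{\mathbb{F}\,\rho^{\otimes 2}} &= \sum_{a,b}\tr{\ketbra{a}{b}\rho}\,\tr{\ketbra{b}{a}\rho}\\
    &= \sum_{a,b}\bra{b}\rho\ket{a}\bra{a}\rho\ket{b}.
\end{align}
Finally, resolving the identity through $\sum_a\ketbra{a}{a}=\mathbb{I}$ collapses the inner sum, leaving $\sum_b\bra{b}\rho^2\ket{b}=\tr{\rho^2}$, which is the claim.

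There is no serious obstacle here; the only point requiring care is the trace factorization, i.e.\ verifying that $\tr{(\ketbra{a}{b}\rho)\otimes(\ketbra{b}{a}\rho)}$ splits as the product of two single-system traces. An equally short alternative expands $\rho=\sum_{a,b}\rho_{ab}\ketbra{a}{b}$ directly, applies $\mathbb{F}$ to each $\ket{a}\ket{c}$, and tracks the Kronecker deltas produced by the trace to land on $\sum_{a,b}\rho_{ab}\rho_{ba}=\tr{\rho^2}$. I would present whichever is cleaner, and optionally remark that the same manipulation with the Pauli form of Eq.~\eqref{eq:SWAPfromPauli} re-expresses $\tr{\rho^2}$ in terms of the Stokes parameters, connecting to the $n$-tangle discussion above.
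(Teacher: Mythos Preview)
Your proof is correct. Both arguments are short basis computations, but they proceed in complementary directions: the paper expands the \emph{states} via their spectral decompositions $\rho=\sum_{\boldsymbol{i}}a_{\boldsymbol{i}}\dya{\boldsymbol{i}}$, applies $\mathbb{F}$ to the resulting product kets, and then evaluates the trace in the product basis; you instead expand the \emph{operator} as $\mathbb{F}=\sum_{a,b}\dyad{a}{b}\otimes\dyad{b}{a}$ and leave $\rho$ untouched. Your route is marginally cleaner in that it never invokes diagonalizability of $\rho$ and immediately yields the more general identity $\tr{\mathbb{F}\,\rho\otimes\sigma}=\tr{\rho\sigma}$ without any tacit assumption that $\rho$ and $\sigma$ share an eigenbasis (the paper's derivation, as written, uses the same basis labels for both spectral decompositions, which is harmless once one sets $\sigma=\rho$ but would need a word of justification otherwise). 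The paper's version, on the other hand, makes the action of $\mathbb{F}$ on kets explicit, which ties in more directly with the defining Eq.~\eqref{eq:SWAP}. Either is perfectly acceptable here.
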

\begin{proof}
Let $\rho,\sigma$ be two $n$-qubit quantum states with spectral decompositions given as
\begin{align}
    \rho=\sum_{\boldsymbol{i}} a_{\boldsymbol{i}} \ket{\boldsymbol{i}}\bra{\boldsymbol{i}} \quad \text{ and } \quad \sigma=\sum_{\boldsymbol{j}} b_{\boldsymbol{j}} \ket{\boldsymbol{j}}\bra{\boldsymbol{j}}.
\end{align}
This allows us to write
\begin{align}
    \tr{\mathbb{F} \rho \otimes \sigma} &= \tr{\mathbb{F} \sum_{\boldsymbol{i},\boldsymbol{j}} a_{\boldsymbol{i}} b_{\boldsymbol{j}} \ket{\boldsymbol{i}}\bra{\boldsymbol{i}} \otimes \ket{\boldsymbol{j}}\bra{\boldsymbol{j}}},\\  
    &= \tr{\sum_{\boldsymbol{i},\boldsymbol{j}} a_{\boldsymbol{i}} b_{\boldsymbol{j}} \mathbb{F}  \ket{\boldsymbol{i}}\bra{\boldsymbol{i}} \otimes \ket{\boldsymbol{j}}\bra{\boldsymbol{j}}},\\  
    &= \tr{\sum_{\boldsymbol{i},\boldsymbol{j}} a_{\boldsymbol{i}} b_{\boldsymbol{j}}   \ket{\boldsymbol{j}}\bra{\boldsymbol{i}} \otimes \ket{\boldsymbol{i}}\bra{\boldsymbol{j}}},\\  
    &= \sum_{\boldsymbol{k}} \bra{\boldsymbol{k}}  \bra{\boldsymbol{k}} \left(\sum_{\boldsymbol{i},\boldsymbol{j}} a_{\boldsymbol{i}} b_{\boldsymbol{j}}   \ket{\boldsymbol{j}}\bra{\boldsymbol{i}} \otimes \ket{\boldsymbol{i}}\bra{\boldsymbol{j}}\right)\ket{\boldsymbol{k}}  \ket{\boldsymbol{k}} ,\\ 
    &= \sum_{\boldsymbol{k}} a_{\boldsymbol{k}} b_{\boldsymbol{k}},\\
    &= \tr{\rho \sigma}.
\end{align}
Letting $\sigma = \rho$ completes the proof.
\end{proof}
Finally, we introduce some results from classical statistics that will be utilized throughout to obtain confidence intervals for our estimators. 
\subsection{Results from classical statistics}
\subsubsection{Hoeffding's inequality}
Hoeffding's inequality is a concentration inequality that applies to independent random variables. We will state it without proof as it is a standard result proven in most mathematical statistics textbooks.

\begin{fact}[Hoeffding's inequality] \label{fact:hoeffding}
Let $X_1, \dots, X_M$ be independent random variables such that $a_i \leq X_i \leq b_i,$ and $\E{X_i}=\mu.$ Then, for any $\epsilon > 0$,
\begin{align}
    \label{eq:Hoeffding-mean}
    \text{Prob}\left( \left|\frac{1}{M}\sum_{i=1}^M X_i - \mu \right| < \epsilon \right) \geq 1- 2\exp{\left(- \frac{2M^2 \epsilon^2}{\sum_{i=1}^M (b_i - a_i)^2}\right)}.
\end{align}
\end{fact}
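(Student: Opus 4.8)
The plan is to prove the two-sided bound by the standard Chernoff (exponential-moment) method, reducing first to a single tail. Writing $\bar{X} = \frac{1}{M}\sum_{i=1}^M X_i$, the event $\{|\bar{X} - \mu| \geq \epsilon\}$ is contained in the union of the one-sided events $\{\bar{X} - \mu \geq \epsilon\}$ and $\{\bar{X} - \mu \leq -\epsilon\}$, so by the union bound it suffices to show that each has probability at most $\exp\!\left(-2M^2\epsilon^2/\sum_{i=1}^M (b_i-a_i)^2\right)$; the resulting factor of two and the passage to the complementary event $\{|\bar{X}-\mu| < \epsilon\}$ then give exactly the stated inequality. Replacing each $X_i$ by $-X_i$ swaps $a_i$ and $b_i$ but leaves $(b_i-a_i)^2$ unchanged, so the lower tail reduces to the upper tail and I would only treat $\text{Prob}(\bar{X}-\mu \geq \epsilon)$.

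For the upper tail I would introduce a free parameter $t>0$ and apply Markov's inequality to the nonnegative variable $e^{t(\bar{X}-\mu)}$:
\begin{align}
\text{Prob}(\bar{X}-\mu \geq \epsilon) \leq e^{-t\epsilon}\, \E{\exp\!\left(\frac{t}{M}\sum_{i=1}^M (X_i - \mu)\right)}.
\end{align}
Independence of the $X_i$ factorizes the expectation as $\prod_{i=1}^M \E{e^{(t/M)(X_i-\mu)}}$. Each $Y_i := X_i - \mu$ is a zero-mean variable supported on an interval of width $b_i-a_i$, so the essential ingredient is a bound on its moment generating function. Granting a bound of the form $\E{e^{sY_i}} \leq \exp\!\left(s^2(b_i-a_i)^2/8\right)$, the product is at most $\exp\!\left(\frac{t^2}{8M^2}\sum_{i=1}^M (b_i-a_i)^2\right)$, so the tail is bounded by $\exp\!\left(-t\epsilon + \frac{t^2}{8M^2}\sum_{i=1}^M(b_i-a_i)^2\right)$. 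Minimizing this quadratic exponent over $t$, at $t^\star = 4M^2\epsilon/\sum_{i=1}^M(b_i-a_i)^2$, yields precisely the exponent $-2M^2\epsilon^2/\sum_{i=1}^M(b_i-a_i)^2$.

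The main obstacle, and the only step that is not a routine calculation, is the moment-generating-function bound itself (Hoeffding's lemma): for a zero-mean random variable $Y$ with $a \leq Y \leq b$ one has $\E{e^{sY}} \leq e^{s^2(b-a)^2/8}$. I would prove it by convexity of the exponential, writing $e^{sy} \leq \frac{b-y}{b-a}e^{sa} + \frac{y-a}{b-a}e^{sb}$ for $y\in[a,b]$ and taking expectations with $\E{Y}=0$ to get $\E{e^{sY}} \leq e^{\psi(s)}$, where $\psi(s) = \log\!\left(\frac{b}{b-a}e^{sa} - \frac{a}{b-a}e^{sb}\right)$. A direct computation gives $\psi(0)=\psi'(0)=0$, while $\psi''(s)$ equals the variance of a tilted two-point distribution supported on $[a,b]$ and is therefore at most $(b-a)^2/4$ for all $s$ by Popoviciu's inequality. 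Taylor's theorem with the Lagrange remainder then gives $\psi(s) \leq \tfrac{1}{2}\cdot\tfrac{(b-a)^2}{4}\,s^2 = s^2(b-a)^2/8$. This lemma is the crux; once it is available, the remainder of the argument is the mechanical Chernoff optimization described above.
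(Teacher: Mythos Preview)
Your proof is the standard and correct Chernoff-method argument for Hoeffding's inequality: reduce to a single tail by the union bound and symmetry, apply Markov to the exponential moment, factor by independence, invoke Hoeffding's lemma $\E{e^{sY}}\leq e^{s^2(b-a)^2/8}$ for a centred bounded variable, and optimise over $t$. The lemma proof via convexity of the exponential and the bound $\psi''(s)\leq (b-a)^2/4$ is also the textbook route and is carried out correctly.

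There is nothing in the paper to compare against: the statement is recorded as a \emph{Fact} and the paper explicitly says ``We will state it without proof as it is a standard result proven in most mathematical statistics textbooks.'' Your argument would serve as a complete self-contained proof if one were desired.
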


\subsubsection{Clopper-Pearson Confidence Intervals}
Clopper-Pearson confidence intervals apply to Bernoulli random variables (i.e. random variables that take only two possible values). Consider a series of $M$ Bernoulli trials in which we measure a binary variable $X\in \{0,1\}$ such that $P(X=1)=p, P(X=0)=1-p$. If we keep a register of the outcomes $\{X_1,...,X_M\}$, the probability of obtaining the result $X=1$ $k$ times is given by the binomial distribution
\begin{equation}
P(k)={{M}\choose{k}}p^k(1-p)^{M-k}.
\label{eq:Binomial_pmf}
\end{equation}
We are interested in asking the reverse question, i.e., given that we have obtained the result $X=1$ $k$ times in $M$ trials, what is the underlying value of $p$? According to the Maximum Likelihood Estimation (MLE) procedure, we can find the most likely value $\tilde{p}$ by maximising Eq.~\eqref{eq:Binomial_pmf} with respect to $p$ keeping $k$ and $M$ fixed. Doing this, we find
\begin{equation}
\tilde{p}=\frac{k}{M}.
\end{equation}
The Clopper-Pearson confidence intervals provide bounds for how accurate this estimation of the binomial parameter $p$ is. The upper and lower limits $p_U$, $p_L$ are defined to incorporate all values of $p$ that are included with a probability greater than a threshold $\delta$, which defines a $100\times(1-\delta)\%$ confidence interval.  For $k$ observations in $n$ trials, these bounds are found by solving numerically the following equations
 \begin{align}
&\sum_{i=0}^k {{M}\choose{i}}p_U(k)^i(1-p_U(k))^{M-i}=\delta/2\label{eq:CPU},\\
&\sum_{i=k}^M {{M}\choose{i}}p_L(k)^i(1-p_L(k))^{M-i}=\delta/2.\label{eq:CPL}
\end{align}
These results from classical statistics will be used to determine how many measurement repetitions are needed to obtain $\epsilon$-close estimates of quantities herein. Before proceeding to the proofs of the main results, we will review some relevant results from the literature. While not new, these results provide a gentler introduction to the methods used in our main results. As such, we include detailed proofs for the readers convenience. 
\subsection{Estimating subsystem purities from Bell basis measurements}
One of the first examples of ancilla-free purity estimation was carried out in \cite{daley2012measuring}. It was also discussed in a very interesting paper demonstrating the connections between the Hong-ou-Mandel effect and the SWAP test~\cite{garcia-escartin2013swap}. Since then, other groups have used these methods in cutting-edge experiments \cite{islam2015measuring,kaufman2016quantum,bluvstein2022quantum}. For completeness, and to motivate our extensions, we describe how these methods of ancilla-free purity estimation work in detail.
The punchline of our work is that much can be learned from post-processing Bell basis measurement data in interesting ways. We begin by showing how to estimate the purity of a qubit using two copies of the qubit and Bell basis measurements. We will adopt the notation of Ref.\,\cite{huang2021quantum} herein. 

Suppose we carry out $M$ rounds of Bell basis measurements, as shown in Fig. 2 of the main text. Then for round $m \in \{1,\dots,M\}$, for every $k \in \{1,\dots,n\}$ we will perform a Bell basis measurement on the $k$-th test and copy qubit in $\rho$. This measurement yields, with some probability, one of the Bell basis projectors 
\begin{align}
    B_k^{(m)} \in \left\{ \ket{\Phi^+}\bra{\Phi^+},\ket{\Phi^-}\bra{\Phi^-}, \ket{\Psi^+}\bra{\Psi^+},\ket{\Psi^-}\bra{\Psi^-}\right\}.
\end{align}
Throughout, you should think of $B_k^{(m)}$ as a random variable that takes values in the Bell basis. For each of the $M$ rounds, we efficiently store the qubit label, $k$, and the corresponding measurement outcome $B_k^{(m)}$. This requires $\mathcal{O}(nM)$ classical bits. We can then enter the prediction phase. First, we consider the simplest case: estimating the purity of a single qubit.

\subsubsection{Single qubit purity estimation}
Let us consider estimating the purity of a single qubit $\rho \in \mathbb{C}^2$ using two copies of $\rho$. That is, let the state to be measured be $\rho \otimes \rho \in \mathcal{H} \otimes \mathcal{H}$ with dim $\mathcal{H}=2$. Measuring in the Bell basis will project $\rho \otimes \rho$ into one of the four Bell states. As explained in Sec.\,\ref{sec:SWAP-operator}, each of the Bell states is an eigenstate of the SWAP operator, $\mathbb{F}$, with eigenvalue $\pm 1$. The probability that we project into a state with a $+1$ eigenvalue is
\begin{align}
    \text{Prob}(+1) &= \tr{ \Pi_+ \rho \otimes \rho},\\
    &= \tr{  \left( \frac{\mathbb{I}\otimes \mathbb{I} + \mathbb{F}}{2}\right)\rho \otimes \rho},\\
    &= \frac{1}{2} \left(1 + \tr{\rho^2}\right),
\end{align}
where we have used Lemma \ref{lemma:swap-trick} and the fact that quantum states have unit trace. Similarly for the $-1$ eigenvalue, we find $\text{Prob}(-1) = \frac{1}{2} \left(1 - \tr{\rho^2}\right)$. Let the purity be denoted $\gamma := \tr{\rho^2}$. Then, we can construct an estimator for the purity based on the sample average of Bell basis measurement outcomes as
\begin{align}
    \hat{\gamma} &= \frac{1}{M} \sum_{m=1}^M \tr{\mathbb{F} B^{(m)}},
\end{align}
where we have suppressed the subscript on $B^{(m)}_k$ because we are dealing with a single qubit state. We say that this estimator is unbiased if $\mathbb{E}[\hat{\gamma}]=\gamma$. In the case of a single qubit, we can show this easily  
\begin{align}
    \E{\hat{\gamma}} &= \E{\frac{1}{M} \sum_{m=1}^M \tr{\mathbb{F} B^{(m)}}},\\
    &= \frac{1}{M}\sum_{m=1}^M \E{\tr{\mathbb{F} B^{(m)}}},\\
    &=\E{\tr{\mathbb{F} B^{(m)}}},\\
    &= (+1)\cdot\text{Prob}(+1) + (-1)\cdot\text{Prob}(-1),\\
    &= \frac{1}{2} \left(1 + \tr{\rho^2}\right) - \frac{1}{2} \left(1 - \tr{\rho^2}\right),\\
    &= \tr{\rho^2},\\
    \E{\hat{\gamma}}&= \gamma.
\end{align}
Thus, we have an unbiased estimator of the purity of a single qubit. It takes just a little bit of work to extend this to the $n$-qubit case. 
\subsubsection{$n$-qubit subsystem purity estimation}
We now generalize the results above to handle the estimation of $\tr{\rho_{\alpha}^2}$ for all $\alpha$. That is, for reduced states of dim$\rho_{\alpha} \in \{2,4,\dots,2^n\}$. To do this, we let the test and copy Hilbert spaces have a tensor product structure themselves, as outline in Sec. \ref{sec:n-qubitSWAP}. Next, recall that the eigenstates of the single qubit SWAP operator are the Bell states, with eigenstates $\pm 1$. Thus, because the eigenstates of the $n$-qubit SWAP operator are the $n$-fold tensor products of Bell states, they must also have eigenvalue $\pm 1$. It follows that
\begin{align}
    \mathbb{F} \bigotimes_{k=1}^n B_k^{(m)} &= \pm 1 \bigotimes_{k=1}^n B_k^{(m)} \quad \implies \tr{\mathbb{F} \bigotimes_{k=1}^n B_k^{(m)}}=\pm 1.
\end{align}
This allows us to write
\begin{align}
     \pm 1 &=\tr{\mathbb{F} \bigotimes_{k=1}^n B_k^{(m)}}, \\
     &= \tr{\bigotimes_{i=1}^n \mathbb{F}_i \bigotimes_{k=1}^n B_k^{(m)}},\\
    &= \tr{\bigotimes_{k=1}^n \mathbb{F}_k B_k^{(m)}},\\
    \pm 1 &= \prod_{k=1}^n \tr{\mathbb{F}_k B_k^{(m)}}.
\end{align}
Letting the product run from $1$ to $n$, one would be able to construct an estimator of the full purity of an $n$-qubit state of interest. However, letting the product only run over a subset of qubit labels allows one to construct estimators for any subsystem purity. To see this, first let $\gamma_{\alpha} := \tr{\rho_{\alpha}^2}$. Then, remembering that $\alpha$ denotes the set of qubit labels we are interest in, we can construct estimators for these purities as
\begin{align}\label{eq:subsytem-purity}
    \hat{\gamma}_{\alpha} &= \frac{1}{M} \sum_{m=1}^M \prod_{k \in \alpha }\tr{\mathbb{F}_k B_k^{(m)}}.
\end{align}
To see that this is an unbiased estimator of subsystem purity, we consider the expectation value with respect to Bell basis measurement outcomes of this quantity 
\begin{align}
    \E{ \hat{\gamma}_{\alpha}} &=\E{\frac{1}{M} \sum_{m=1}^M \prod_{k \in \alpha }\tr{\mathbb{F}_k B_k^{(m)}}},\\
    &= \frac{1}{M} \sum_{m=1}^M \E{ \prod_{k \in \alpha }\tr{\mathbb{F}_k B_k^{(m)}}},\\
    &= \E{ \prod_{k \in \alpha }\tr{\mathbb{F}_k B_k^{(m)}}},\\
    &= (+1)\cdot \text{Prob}(+) + (-1)\cdot \text{Prob}(-),\\
    &= \tr{\Pi_+ \rho_{\alpha} \otimes \rho_{\alpha}} - \tr{\Pi_- \rho_{\alpha} \otimes \rho_{\alpha}}, \\
    &=\tr{\frac{\mathbb{I} + \mathbb{F}}{2} \rho_{\alpha} \otimes \rho_{\alpha}} - \tr{\frac{\mathbb{I} - \mathbb{F}}{2} \rho_{\alpha} \otimes \rho_{\alpha}},\\
    &= \tr{\mathbb{F} \rho_{\alpha} \otimes \rho_{\alpha}},\\
    \E{ \hat{\gamma}_{\alpha}}&= \tr{\rho_{\alpha }^2}, 
\end{align}
as desired. Note that here $\mathbb{F}$ only acts on the test and copy qubits labelled by the set $\alpha$. It then follows from Hoeffding's inequality that given $\Theta{(\log{(1/\delta)}/\epsilon^2)}$ measurements, we have
\begin{align}
     \text{Prob}\left( \left|\hat{\gamma}_{\alpha}-\gamma_{\alpha} \right| < \epsilon \right) &\geq 1- 2\exp{\left(\frac{-N \epsilon^2}{2}\right)}.
\end{align}
We note, however, that because subsystem purities can be as small as $\frac{1}{2^{|\alpha|}}$, one must set $\epsilon \sim \frac{1}{2^{|\alpha|}}$. Thus, the number of measurements required to obtain an $\epsilon$-close approximation of subsystem purity scales with the square of the subsystem dimension. That is, $ M \sim \Theta ( \log{(1/\delta)} \cdot  4^{|\alpha|})$. 

With these fundamentals and previously known results in mind, we are can proceed to the proofs of the main results in the text.

\section{Proofs of main results}\label{sec:proofs}

\subsection{Unbiased estimation of CE via Bell basis measurements}\label{sec:CE-from-bell}
We want to construct an estimator, $\hat{\mathcal{C}}_{\ket{\psi}}(s)$, that depends only on the Bell basis measurement outcomes and whose expectation value is the concentratable entanglement
\begin{align}
    \E{\hat{\mathcal{C}}_{\ket{\psi}}(s)} &= 1 - \frac{1}{2^{|s|}}\sum_{\alpha \in \mathcal{P}(s)} \text{tr}[\rho_{\alpha}^2]. 
\end{align}
Because the Bell states are eigenstates of the SWAP operator, we know $\tr{\mathbb{F}_k B^{(m)}_k} = \pm 1$. Thus, the outcome of measuring the $k$-th test and copy qubit in the Bell basis will be $\pm 1$. With this in mind, we can state the main theorem from the text.
\begin{theorem}\label{thm:CE-from-Bell}
The quantity 
\begin{align}
    \hat{\mathcal{C}}_{\ket{\psi}}(s) &= 1- \frac{1}{M}\sum_{m=1}^M \prod_{k\in s} \left(\frac{1+\tr{\mathbb{F}_k B^{(m)}_k}}{2} \right),
\end{align}
is an unbiased estimator of the concentratable entanglement. That is,
\begin{align}
    \E{\hat{\mathcal{C}}_{\ket{\psi}}(s)} &= 1 - \frac{1}{2^{|s|}}\sum_{\alpha \in \mathcal{P}(s)} \text{tr}[\rho_{\alpha}^2],
\end{align}
where the expectation value is with respect to the probability distribution induced by the Bell basis measurement. 
\end{theorem}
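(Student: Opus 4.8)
The plan is to reduce everything to a single measurement round by linearity of expectation and then to recognize the key factor as a symmetric-subspace indicator. First I would note that, by linearity and because the $M$ rounds are i.i.d., $\E{\hat{\CC}_{\ket{\psi}}(s)} = 1 - \E{\prod_{k\in s}\frac{1+\tr{\mathbb{F}_k B_k}}{2}}$, so it suffices to compute the expectation of the product over a single round. The crucial observation is that since $\tr{\mathbb{F}_k B_k} = \pm 1$, the factor $\frac{1+\tr{\mathbb{F}_k B_k}}{2}$ equals $1$ when $B_k$ is a triplet and $0$ when it is the singlet; equivalently $\frac{1+\tr{\mathbb{F}_k B_k}}{2} = \tr{\Pi_+^{(k)} B_k}$ with $\Pi_+^{(k)} = (\mathbb{I} + \mathbb{F}_k)/2$ the symmetric-subspace projector on the $k$-th test-copy pair.

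Next I would write the expectation explicitly as a sum over joint Bell outcomes $\{B_k\}_{k=1}^n$ weighted by the Born probability $\tr{(\bigotimes_{k} B_k)\ket{\psi}\bra{\psi}^{\otimes 2}}$. Since the product only involves factors for $k\in s$, summing each $B_k$ with $k\notin s$ over the full Bell basis collapses those slots to the identity by completeness, while summing over the three triplets for each $k\in s$ yields $\Pi_+^{(k)}$. This gives $\E{\prod_{k\in s}\frac{1+\tr{\mathbb{F}_k B_k}}{2}} = \tr{(\bigotimes_{k\in s}\Pi_+^{(k)} \otimes \bigotimes_{k\notin s}\mathbb{I})\ket{\psi}\bra{\psi}^{\otimes 2}}$.

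The remaining step is a purely algebraic expansion. Expanding each $\Pi_+^{(k)} = (\mathbb{I}+\mathbb{F}_k)/2$ distributes the tensor product into $2^{|s|}$ terms indexed by subsets $\alpha \subseteq s$, each of the form $2^{-|s|}\tr{(\bigotimes_{k\in\alpha}\mathbb{F}_k)\ket{\psi}\bra{\psi}^{\otimes 2}}$, where the SWAPs act only on the pairs in $\alpha$. Tracing out all qubits outside $\alpha$ on both copies reduces this to a partial swap on $\rho_\alpha\otimes\rho_\alpha$, and the swap trick (Lemma \ref{lemma:swap-trick}) applied to the subsystem $\alpha$ gives $\tr{\rho_\alpha^2}$; the $\alpha=\emptyset$ term contributes $\tr{\ket{\psi}\bra{\psi}^{\otimes 2}} = 1 = \tr{\rho_\emptyset^2}$, consistent with the stated convention. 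Summing over $\alpha\in\PC(s)$ yields $2^{-|s|}\sum_{\alpha}\tr{\rho_\alpha^2}$, and substituting back completes the proof.

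I expect the main obstacle to be the bookkeeping in the middle step: carefully justifying that the expectation of a product of single-pair indicators equals the trace of the tensor product of symmetric projectors against the two-copy state. This hinges on the joint outcomes being governed by the product POVM $\{\bigotimes_k B_k\}$ and on the completeness relation $\sum_{B}B = \mathbb{I}$ for the unmeasured slots, together with correctly identifying $\sum_{\text{triplets}} B = \Pi_+$. Everything downstream is a clean, term-by-term application of the swap trick.
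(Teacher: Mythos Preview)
Your proposal is correct and follows essentially the same route as the paper's proof: both reduce to a single round by linearity, identify the product as an indicator for ``all triplets on $s$'' so that its expectation becomes $\tr{\bigl(\bigotimes_{k\in s}\Pi_+^{(k)}\bigr)\rho^{\otimes 2}}$, and then expand each $\Pi_+^{(k)}=(\mathbb{I}+\mathbb{F}_k)/2$ and apply the swap trick term by term. The only cosmetic difference is that the paper recasts $\tr{\mathbb{F}_k B_k^{(m)}}=(-1)^{z_k^{(m)}}$ in bitstring notation and collapses the sum via a Kronecker delta $\delta_{\boldsymbol{z}^{(m)},\boldsymbol{0}}$, whereas you phrase the same collapse as summing the Bell projectors to $\Pi_+^{(k)}$ on $s$ and to the identity off $s$; the bookkeeping you flagged as the potential obstacle is exactly what the paper handles with the delta, and your version of it is arguably cleaner.
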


\begin{proof}
Because $\tr{\mathbb{F}_k B^{(m)}_k} = \pm 1$, we can write $\tr{\mathbb{F}_k B^{(m)}_k} = (-1)^{z^{(m)}_k}$ to convert our two outcomes from $\{-1,1\}$ to $\{0,1\}$. We can then let $\boldsymbol{z}^{(m)}=z^{(m)}_1 z^{(m)}_2 \dotsm z^{(m)}_n$, with $z^{(m)}_j \in \{0,1\}$, denote the bit string of length $n$ obtained as the outcome of the $m$-th measurement of our $n$ pairs of qubits in the test and copy states. In this notation, our estimator becomes
\begin{align}
     \E{\hat{\mathcal{C}}_{\ket{\psi}}(s)} &= 1 - \frac{1}{M}\sum_{m=1}^M \prod_{k \in s} \left(\frac{1+(-1)^{z^{(m)}_k}}{2} \right).
\end{align}
We can now show that this is an unbiased estimator. We obtain
\begin{align}
     \E{\hat{\mathcal{C}}_{\ket{\psi}}(s)} &=  \E{1- \frac{1}{M}\sum_{m=1}^M \prod_{k \in s} \left(\frac{1+(-1)^{z^{(m)}_k}}{2} \right)},\\
    &=  1- \frac{1}{M}\sum_{m=1}^M \E{\prod_{k \in s} \left(\frac{1+(-1)^{z^{(m)}_k}}{2} \right)},\\
    &= 1 - \frac{1}{M}\sum_{m=1}^M \sum_{\boldsymbol{z}^{(m)}} p(\boldsymbol{z}^{(m)}) \prod_{k\in s} \left(\frac{1+(-1)^{z^{(m)}_k}}{2} \right),\\
    &= 1 - \frac{1}{M}\sum_{m=1}^M \sum_{\boldsymbol{z}^{(m)}} \tr{\bigotimes_{k \in s} \frac{\mathbb{I}_k + (-1)^{z^{(m)}_k} \mathbb{F}_k}{2}\rho^{\otimes 2}} \underbrace{\prod_{k\in s} \left(\frac{1+(-1)^{z^{(m)}_k}}{2} \right)}_{\delta_{\boldsymbol{z}^{(m)},\bold{0}}},\\
    &= 1 - \frac{1}{M}\sum_{m=1}^M \tr{\bigotimes_{k \in s} \frac{\mathbb{I}_k + \mathbb{F}_k}{2} \rho^{\otimes 2}},\\
    &= 1 - \tr{\bigotimes_{k \in s} \frac{\mathbb{I}_k + \mathbb{F}_k}{2} \rho^{\otimes 2}},\\
     \E{\hat{\mathcal{C}}_{\ket{\psi}}(s)} &= 1- \frac{1}{2^{|s|}} \sum_{\alpha \in \mathcal{P}(s)} \tr{\rho_{\alpha}^2},
\end{align}
as desired.
\end{proof}
This proof points to a new, simpler interpretation of the CEs. We note that the product
\begin{align}
     \prod_{k=1}^{n} \left(\frac{1+(-1)^{z^{(m)}_k}}{2}\right),
\end{align}
is only nonzero if all $z_k^{(m)}=0$ (i.e. if the measurement round yields all triplet states). Thus, in a given measurement round, the two relevant outcomes are ``all triplet'' or ``at least one singlet,'' making each measurement round a Bernoulli trial. The probability of these two outcomes must sum to unity
\begin{align}
    1 &= \text{Prob}\left(\text{all triplets}\right) + \text{Prob}\left(\text{at least one singlet}\right).
\end{align}
This observation gives the following simple estimator of the CEs
\begin{align}
    \hat{C} &= \frac{\text{number of measurement rounds yielding at least one singlet}}{\text{total number of measurement rounds}},\label{eq:SimpleEstimator}\\
    \hat{C} &= 1 - \sum_{\ZC_0 (s)} p(\boldsymbol{z}),
\end{align}
where $\ZC_0 (s)$ is the set of bit-strings with zeroes on all indices in $s$. This recovers, and provides better intuition for, Prop. 1 in Ref. \cite{beckey2021computable}. 

\subsection{Unbiased estimation of $n$-tangle via Bell basis measurements}
    
\begin{theorem}
The quantity 
\begin{align}
    \hat{\tau}_{(n)} &= \frac{2^n}{M}\sum_{m=1}^M \prod_{k=1}^{n} \left(\frac{1-\tr{\mathbb{F}_k B^{(m)}_k}}{2} \right),
\end{align}
is an unbiased estimator of the $n$-tangle. That is,
\begin{align}
    \E{\hat{\tau}_{(n)}} &= \tau_{(n)},
\end{align}
where the expectation value is over all possible measurement outcomes. 
\end{theorem}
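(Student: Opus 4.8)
The plan is to mirror the structure of the proof of Theorem~\ref{thm:CE-from-Bell}, replacing the symmetric (triplet) projectors with antisymmetric (singlet) ones, and then to rewrite the resulting operator expression in terms of the Stokes parameters so as to recognize it as $S_{(n)}^2$, which coincides with $\tau_{(n)}$ on pure states by Ref.~\cite{jaeger2003entanglement}.

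First I would use linearity of expectation to pull out the prefactor and the sum over rounds, reducing the claim to evaluating $\E{\prod_{k=1}^n (1-\tr{\mathbb{F}_k B_k^{(m)}})/2}$ for a single round. Since $\tr{\mathbb{F}_k B_k^{(m)}} = \pm 1$, each factor $(1-\tr{\mathbb{F}_k B_k^{(m)}})/2$ is the indicator that the $k$-th pair was projected into the singlet subspace. Exactly as in the CE proof, but with $\mathbb{I}_k + \mathbb{F}_k$ everywhere replaced by $\mathbb{I}_k - \mathbb{F}_k$, its expectation over the Bell-measurement distribution is the corresponding probability, giving
\begin{align}
\E{\hat{\tau}_{(n)}} = 2^n\, \tr{\bigotimes_{k=1}^n \frac{\mathbb{I}_k - \mathbb{F}_k}{2}\,\rho^{\otimes 2}}.
\end{align}

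Next I would substitute the Pauli decomposition of the SWAP operator, Eq.~\eqref{eq:SWAPfromPauli}. Because $\mathbb{I}_k = \sigma_{0_k}$, the identity contribution partially cancels and one obtains the compact form $\frac{\mathbb{I}_k - \mathbb{F}_k}{2} = \frac{1}{4}\big(\sigma_{0_k} - \sigma_{1_k} - \sigma_{2_k} - \sigma_{3_k}\big)$, so the $n$-fold tensor product carries an overall $4^{-n}$ that combines with the $2^n$ prefactor to yield $2^{-n}$. Expanding $\bigotimes_{k=1}^n(\sigma_{0_k}-\sigma_{1_k}-\sigma_{2_k}-\sigma_{3_k})$ produces a sum over tuples $(i_1,\dots,i_n)\in\{0,1,2,3\}^n$ weighted by the sign $(-1)^{\#\{k:\,i_k\neq 0\}}$. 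The key identity is that each $\sigma_{i_k}$ acts as $\sigma_{i_k}\otimes\sigma_{i_k}$ across the test and copy registers, so regrouping the factors and tracing a fixed tuple against $\rho^{\otimes 2}=\rho\otimes\rho$ factorizes into $\tr{\rho\,\sigma_{i_1}\otimes\dots\otimes\sigma_{i_n}}^2 = S_{i_1\dots i_n}^2$.

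Collecting these contributions by the number of nonzero indices reproduces term-by-term the alternating-sign expansion defining $S_{(n)}^2$, whence $\E{\hat{\tau}_{(n)}} = S_{(n)}^2 = \tau_{(n)}$ for pure states. The hard part will be the tensor bookkeeping in the penultimate step: one must keep the test and copy factors carefully ordered to expose the factorization into a \emph{squared} Stokes parameter, and must track the sign $(-1)^{\#\{k:\,i_k\neq 0\}}$ so that it aligns exactly with the alternating signs in the definition of $S_{(n)}^2$. The remaining steps are routine linearity of expectation and substitution.
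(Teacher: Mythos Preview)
Your proposal is correct and follows essentially the same route as the paper: reduce by linearity to a single round, identify the product as the all-singlets indicator so that its expectation is $2^n\tr{\bigotimes_k \frac{\mathbb{I}_k-\mathbb{F}_k}{2}\rho^{\otimes 2}}$, then substitute the Pauli decomposition of $\mathbb{F}_k$ and expand to recover the alternating Stokes-parameter sum $S_{(n)}^2=\tau_{(n)}$. The only cosmetic difference is that the paper writes the single-round expectation via an explicit bitstring sum collapsing under a $\delta_{\boldsymbol{z}^{(m)},\mathbf{1}}$, whereas you phrase the same step in indicator-function language.
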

\begin{proof}
As above, let $\boldsymbol{z}^{(m)}=z^{(m)}_1 z^{(m)}_2 \dotsm z^{(m)}_n$, with $z^{(m)}_j \in \{0,1\}$ denote the bit string of length $n$ obtained as the outcome of the $m$-th measurement of our $n$ pairs of qubits in the test and copy states. Because $\tr{\mathbb{F}_k B^{(m)}_k} = \pm 1$, we can write $\tr{\mathbb{F}_k B^{(m)}_k} = (-1)^{z^{(m)}_k}$. Thus, our estimator becomes
\begin{align}
    \hat{\tau}_{(n)} &=  \frac{2^n}{M}\sum_{m=1}^M \prod_{k=1}^{n} \left(\frac{1-(-1)^{z^{(m)}_k}}{2} \right).
\end{align}
We can now show that this is an unbiased estimator of the $n$-tangle. We have
\begin{align}
    \E{\hat{\tau}_{(n)}} &=  \E{ \frac{2^n}{M}\sum_{m=1}^M \prod_{k=1}^{n} \left(\frac{1-(-1)^{z^{(m)}_k}}{2} \right)},\\
    &=  \frac{2^n}{M}\sum_{m=1}^M \E{\prod_{k=1}^{n} \left(\frac{1-(-1)^{z^{(m)}_k}}{2} \right)},\\
    &= \frac{2^n}{M}\sum_{m=1}^M \sum_{\boldsymbol{z}^{(m)}} p(\boldsymbol{z}^{(m)}) \prod_{k=1}^{n} \left(\frac{1-(-1)^{z^{(m)}_k}}{2} \right),\\
    &= \frac{2^n}{M}\sum_{m=1}^M \sum_{\boldsymbol{z}^{(m)}} \tr{\bigotimes_{k=1}^n \frac{\mathbb{I}_k - (-1)^{z^{(m)}_k} \mathbb{F}_k}{2}\rho^{\otimes 2}} \underbrace{\prod_{k=1}^{n} \left(\frac{1-(-1)^{z^{(m)}_k}}{2} \right)}_{\delta_{\boldsymbol{z}^{(m)},\bold{1}}},\\
    &= \frac{2^n}{M}\sum_{m=1}^M \tr{\bigotimes_{k=1}^n \frac{\mathbb{I}_k - \mathbb{F}_k}{2} \rho^{\otimes 2}},\\
    &=  2^n \tr{\bigotimes_{k=1}^n \frac{\sigma_{0_k}- \frac{1}{2}\left(\sigma_{0_k} + \sigma_{1_k} + \sigma_{2_k} + \sigma_{3_k}\right)}{2} \rho^{\otimes 2}},\\
    &= \frac{1}{2^n} \tr{\bigotimes_{k=1}^n (\sigma_{0_k}- \sigma_{1_k} - \sigma_{2_k} - \sigma_{3_k}) \rho^{\otimes 2}},\\
    &=\frac{1}{2^n} \left((S_{0\dots 0})^2 - \sum_{k=1}^n \sum_{i_k}^3 (S_{0\dots i_k \dots 0})^2 + \sum_{k,l=1}^n \sum_{i_k, i_l =1}^3 (S_{0\dots i_k \dots i_l \dots 0})^2 - \dotsm + (-1)^n \sum_{i_1, \dots, i_n}^3 (S_{i_1 \dots i_n})^2\right),\\
    &=S^2_{(n)},\\
    \E{\hat{\tau}_{(n)}} &=\tau_{(n)},
\end{align}
where the last three lines utilize the results of Ref. \cite{jaeger2003entanglement} in which the $n$-tangle is written in terms of the so-called $n$-qubit Stokes parameters defined as $S_{i_1,\dots,i_n}=\tr{\rho \sigma_{i_1}\otimes \dotsm \otimes \sigma_{i_n}}$ for $i_1,\dots,i_n \in \{ 0,1,2,3 \}$.
\end{proof}

As with the CEs, this proof points to a new, simpler interpretation of the $n$-tangle. We note that the product
\begin{align}
     \prod_{k=1}^{n} \left(\frac{1-(-1)^{z^{(m)}_k}}{2}\right),
\end{align}
is only nonzero if all $z_k^{(m)}=1$ (i.e. if the measurement round yields all singlet states). Thus, when one is interested in estimating the $n$-tangle, the two relevant outcomes are ``all singlet'' or ``at least one triplet.'' This observation gives the following simple estimator of the $n$-tangle
\begin{align}
    \hat{\tau} &= 2^n \cdot \frac{\text{number of measurement rounds yielding singlets}}{\text{total number of measurement rounds}},\\
    \hat{\tau} &= 2^n p(\boldsymbol{1}).
\end{align}
We note that this derivation recovers, and provides intuition for, Proposition 5 of Ref. \cite{beckey2021computable}.
\subsection{Number of measurements required for $\epsilon$-close estimations}
\subsubsection{Analytical method}
\begin{proposition}
    Let $\epsilon, \delta > 0$ and $M=\Theta \left({\frac{\log{1/\delta}}{\epsilon^2}}\right)$. Further, let $\theta \in \{\mathcal{C}_{\ket{\psi}}(s), \tau_{(n)}\}$ and let $\hat{\theta}$ denote the corresponding estimator for $\theta$. Then we have 
    \begin{align}
        \left|  \hat{\theta} - \theta \right| < \epsilon,
    \end{align}
    with probability at least $1-\delta$.
\end{proposition}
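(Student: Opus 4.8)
The plan is to recognize that each estimator is, up to an affine rescaling, a sample mean of $M$ independent bounded random variables, so that the claim reduces to a direct invocation of Hoeffding's inequality (Fact~\ref{fact:hoeffding}). The two theorems above already establish unbiasedness, $\E{\hat{\theta}} = \theta$ for both choices of $\theta$, so the only work left is to pin down the per-round range $[a_i,b_i]$ of the summands and then solve the resulting tail bound for $M$.

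First I would treat the CE. Writing $X_m = \prod_{k\in s}\bigl(\frac{1+\tr{\mathbb{F}_k B^{(m)}_k}}{2}\bigr)$, the key observation is that each factor equals $\frac{1\pm 1}{2}\in\{0,1\}$ because $\tr{\mathbb{F}_k B^{(m)}_k}=\pm 1$; hence $X_m\in\{0,1\}\subseteq[0,1]$, and the $X_m$ are independent across the $M$ rounds since each round is a fresh Bell measurement on an independently prepared pair of copies. As $\hat{\mathcal{C}}_{\ket{\psi}}(s)=1-\frac{1}{M}\sum_m X_m$ is affine in the sample mean, $\bigl|\hat{\mathcal{C}}_{\ket{\psi}}(s)-\mathcal{C}_{\ket{\psi}}(s)\bigr|=\bigl|\frac{1}{M}\sum_m X_m-\E{X_1}\bigr|$, and Fact~\ref{fact:hoeffding} with $a_i=0,\,b_i=1$ bounds the failure probability by $2\exp(-2M\epsilon^2)$.

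Next I would treat the $n$-tangle identically, the only change being the range: $Z_m=2^n\prod_{k=1}^n\bigl(\frac{1-\tr{\mathbb{F}_k B^{(m)}_k}}{2}\bigr)\in\{0,2^n\}$, so $b_i-a_i=2^n$ while $\hat{\tau}_{(n)}=\frac{1}{M}\sum_m Z_m$ is already the sample mean. Hoeffding then yields a failure probability at most $2\exp\bigl(-2M\epsilon^2/4^n\bigr)$. In either case, demanding this be at most $\delta$ and solving gives $M\geq\frac{(b-a)^2\log(2/\delta)}{2\epsilon^2}$, which is $\Theta(\log(1/\delta)/\epsilon^2)$ as claimed.

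The step requiring the most care is identifying the summand range, since it differs between the two estimators, and the main subtlety to flag is that for the tangle the prefactor $(b-a)^2=4^n$ is exponential in $n$: with $n$ held fixed it is a constant absorbed into the $\Theta(\cdot)$, but it signals that the tangle estimator carries a hidden $4^n$ overhead, consistent with the subsystem-purity scaling remarked on earlier. I would also note that independence across the $M$ rounds is precisely what licenses the use of Hoeffding, and that the two-sided factor of $2$ appearing inside $\log(2/\delta)$ affects only the constant and not the stated $\Theta$ scaling.
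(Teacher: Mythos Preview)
Your approach is the same as the paper's: both invoke Hoeffding's inequality directly on the per-round summands to obtain the $\Theta(\log(1/\delta)/\epsilon^2)$ scaling. The paper's proof is in fact terser than yours: it simply asserts ``because our quantities satisfy $0\leq\theta\leq 1$'' and plugs $(b_i-a_i)=1$ into Hoeffding for \emph{both} estimators, yielding $M=\log(2/\delta)/(2\epsilon^2)$ uniformly.

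Your treatment of the $n$-tangle is actually more careful than the paper's. You correctly observe that the per-round summand $Z_m\in\{0,2^n\}$, so the Hoeffding range is $2^n$ and the failure probability is $2\exp(-2M\epsilon^2/4^n)$, whereas the paper's proof silently treats this as range $1$. Both arrive at the same $\Theta(\cdot)$ statement once $n$ is held fixed, as you note, but your version makes explicit the hidden $4^n$ constant for the tangle---a point the paper acknowledges elsewhere for subsystem purities but does not spell out in this proof.
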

\begin{proof}
From Hoeffding's inequality (Fact \ref{fact:hoeffding} above), we can write
\begin{align}
    \text{Prob}\left(\left| \hat{\theta} - \theta \right| < \epsilon \right) \geq 1 - 2\exp{\left(-\frac{2 M^2 \epsilon^2}{\sum_{m=1}^M (1-0)^2}\right)},
\end{align}
because our quantities satisfy $0 \leq \theta \leq 1$. Thus, Hoeffding's inequality tells us that to achieve an $\epsilon$-close approximation of the elements of $\theta$, with probability at least $1-\delta$, one would require at least
\begin{align}
\delta &= 2\exp{\left(\frac{-2M^2 \epsilon^2}{\sum_{m=1}^M (1-0)^2} \right)},\\
\delta &= 2\exp{\left(-2M \epsilon^2 \right)},\\
\implies M &= \frac{\log{2 / \delta}}{2\epsilon^2}.
\end{align}
measurements. The constants are ignored in big-$\Theta$ notation, yielding the desired result that
\begin{align}
    M=\Theta\left(\frac{\log{1/\delta}}{\epsilon^2}\right)
\end{align}
measurements are needed to obtain an $\epsilon$-close estimate of $\theta$ with high probability. 
\end{proof}
\subsubsection{Numerical method}
While the above method, based on Hoeffding's inequality, is nice for deriving analytical scaling, it does not take into account the underlying distribution and is likely not tight as a result. As discussed in Supp. Sec.~\ref{sec:CE-from-bell}, the CE can be estimated by simply computing the probability of obtaining all triplet states on the systems measured. As such, we can regard the Bell basis measurement of all qubit pairs as a Bernoulli trial in which $X=0$ corresponds to measuring all pairs in a triplet state and $X=1$ to measuring at least one pair in the singlet state. Then, according to \eqref{eq:SimpleEstimator} estimating the CE is equivalent to performing a MLE of the binomial probability $\theta=P(X=1)$. Since the number of times $k$ that $X=1$ is obtained in $M$ measurement rounds is a random variable with probability mass function given by \eqref{eq:Binomial_pmf}, we can compute the average upper and lower bounds of the $100\times (1-\delta)\%$ confidence interval as
 \begin{align}
&\langle \theta_U\rangle(M) =\sum_{k=0}^{M}P(k)\theta_U(k)=\sum_{k=0}^{M} {{M}\choose{k}}\theta^k(1-\theta)^{M-k} \theta_U(k),\\
&\langle \theta_L\rangle(M) =\sum_{k=0}^{M}P(k)\theta_L(k)=\sum_{k=0}^{M} {{M}\choose{k}}\theta^k(1-\theta)^{M-k} \theta_L(k),
\end{align}
where each value of $\theta_U(k), \theta_L(k)$ is found by solving Eqs.~\ref{eq:CPU},\ref{eq:CPL}. Taking the average of the upper and lower bounds, for each number of measurements $M$ we can say that
\begin{equation}
\left|  \hat{\theta} - \theta \right| < \frac{\langle \theta_U\rangle(M)-\langle \theta_L\rangle(M)}{2}:= \epsilon
\end{equation}
with probability $1-\delta$. 

\subsection{Concentratable Entanglement of Mixed States}
\subsubsection{Lower bound on the mixed state bipartite concurrence}
Recall that the concurrence of a pure bipartite quantum state, $\ket{\psi} \in \HC_A \otimes \HC_B$, is given as
\begin{align}
    c_2 (\ket{\psi}) &= \sqrt{2(1-\tr{\rho_A^2})}.
\end{align}
The standard convex-roof extension can then be used make this measure well defined for mixed state inputs \cite{bennett1996mixed,uhlmann2010roofs}
\begin{align}
    c_2 (\rho) &= \inf \sum_i p_i c_2 (\ket{\psi_i}),
\end{align}
where $\rho=\sum_i p_i \ket{\psi_i}\bra{\psi_i}$ and $\sum_i p_i =1.$ To avoid having to do any optimization, the authors of Ref.~\cite{mintert2007observable} introduce an observable lower bound on the bipartite concurrence given as
\begin{align}
    [c_2(\rho)]^2 \geq 2\tr{\rho^2} -\tr{\rho_A^2} - \tr{\rho_B^2}.
\end{align}
When $\tr{\rho^2}=1$, $\tr{\rho_A^2}=\tr{\rho_B^2}$ via the Schmidt decomposition and we recover exactly the pure state bipartite concurrence squared $2(1-\tr{\rho_A^2})$. 
\subsubsection{Lower bound on mixed state CEs}
With the above bipartite bound in mind, we can construct lower bounds on $\CC_{\rho}(s)$ using the relationship between CEs and the bipartite concurrences  $c_{\alpha}(\ket{\psi}) := \sqrt{2(1-\tr{\rho_{\alpha}^2})}$ \cite{wootters2001entanglement}. As stated in the main text, any CE can be expressed in terms of bipartite concurrences as
\begin{align}
    \CC_{\ket{\psi}} (s) &= \frac{1}{2^{|s|+1}} \sum_{\alpha} c_{\alpha}^2(\ket{\psi}),\\
    &= \frac{1}{2^{|s|+1}} \sum_{\alpha} 2(1-\tr{\rho_{\alpha}^2}),\\
    &= \frac{1}{2^{|s|}} \sum_{\alpha} (1-\tr{\rho_{\alpha}^2}),\\
    \CC_{\ket{\psi}} (s) &= 1 -\frac{1}{2^{|s|}}\sum_{\alpha} \tr{\rho_{\alpha}^2}
\end{align}
Then, because we can use the known lower bound for each bipartite concurrence in the sum, we can construct a lower bound for any CE of interest. This is a generalization of the method used in Ref. \cite{aolita2008scalable} in which the authors derive a lower bound on the mixed state multipartite concurrence. 

For example, the lower bound on $\CC_{\rho}(\SC)$ takes the form
\begin{align}
    \CC_{\rho}^{\ell}(\SC) = \frac{1}{2^n} + (1-\frac{1}{2^n})\tr{\rho^2} - \frac{1}{2^n}\sum_{\alpha \in \PC (\SC)} \tr{\rho_{\alpha}^2}.
\end{align}
Because each term in this expression can be directly estimated from Bell basis measurement data, it allows one to quantify mixed state entanglement in the same framework developed above for pure state entanglement. That is, using the estimators for $n$-qubit purity (Eq. \ref{eq:subsytem-purity} and uniform average subsystem purity (by slight modification of the estimator from Thm.~\ref{thm:CE-from-Bell}),  one can estimate the lower bound from Bell basis measurement data via
\begin{align}
    \hat{\CC}_{\rho}^{\ell}(\SC)  = \frac{1}{2^n}  +(1-\frac{1}{2^n})\cdot \frac{1}{M} \sum_{m=1}^M \prod_{k=1}^n \tr{\mathbb{F}_k B_{k}^{(m)}} - \frac{1}{M}\sum_{m=1}^M \prod_{k=1}^n  \frac{1+\tr{\mathbb{F}_k B_k^{(m)}}}{2}.
\end{align}
Similar bounds can be constructed for all $s\subseteq \SC$. These bounds, as well as all of the other measures discussed above, can all be estimated from the same Bell basis measurement data.

\section{Simulation Details}\label{sec:simulations}
\subsection{Ordering of quantum states under CE}
Every entanglement measure puts an ordering on quantum states. By definition, fully separable states must evaluate to zero. However, it is not always clear what states extremize a given entanglement measure. While we do not answer this question in full generality, we do provide analytical formulas for the CEs of W, GHZ, and Line states -- all of which are plotted in Fig. 4 of the main text. We also note that the question of what states extremize the CE has recently been investigated in Ref. \cite{schatzki2022hierarchy}. 
\subsubsection{Analytical formula for $\CC_{|{GHZ}\rangle}(\SC)$}
We begin with the simplest state, GHZ. Recall that an $n$-qubit GHZ state is defined as 
\begin{align}
    \ket{\text{GHZ}_n} &=\frac{1}{\sqrt{2}} (\ket{\vec{0}}+\ket{\vec{1}}).
\end{align}
As one can show, tracing out one or more qubits yields a reduced state with purity $1/2$. Thus, all terms in the CE (except the empty set and the full set) correspond to a reduced state purities of $1/2$. That is, we can write
\begin{align}
\CC_{\ket{\text{GHZ}_n}}(\SC)&=1-\frac{1}{2^n}\sum_{\alpha\in \PC(\SC)} {\rm Tr} \rho_\alpha^2,\\
&= 1 - \frac{1}{2^n} \left(2 + \frac{1}{2} (2^n -2 )\right),\\
\CC_{\ket{\text{GHZ}_n}}(\SC) &= \frac{1}{2} - \frac{1}{2^n}.
\end{align}
We note that this formula was found numerically in Ref. \cite{foulds2021controlled} and analytically in Ref. \cite{cullen2022calculating}. 
\subsubsection{Analytical formula for $\CC_{\ket{W}}(\SC)$}\label{sec:W-state-CE}

Next, we consider W states. Recall they are defined as the equal superposition of all states with labels that have hamming weight 1. That is
\begin{align}
    \ket{W_n} & = \frac{1}{\sqrt{n}}(\ket{10\dotsm 0} + \ket{010\dotsm 0} + \dotsm + \ket{0\dotsm 01}).
\end{align}
For our purposes, it is instructive to note that W states can be generated recursively as
\begin{align}
    \ket{W_2}&= \frac{1}{\sqrt{2}} (\ket{10} +\ket{01}),\\
    \ket{W_n}&= \frac{\sqrt{n-1}}{\sqrt{n}} \ket{W_{n-1}}\otimes \ket{0} + \frac{1}{\sqrt{n}}\ket{\boldsymbol{0}_{n-1}} \otimes \ket{1}, 
\end{align}
where $\boldsymbol{0}_{n-1}$ denotes the all-zero bit string of length $n-1$. If we then let A be a subspace of $(\mathbb{C}^2)^{\otimes n}$ with dimension $d$, and define $\log{d}:=j$, then, for all $0\leq j \leq n-1$, one can show that 
    \begin{align}
        \text{tr}_{\text{A}}{W_n} &= \frac{n-j}{n} \ket{W_{n-j}}\bra{W_{n-j}} + \frac{j}{n}\ket{\vec{0}_{n-1}}\bra{\vec{0}_{n-1}}.
    \end{align}

It follows that
\begin{align}
    \tr{( \text{tr}_{\text{A}}{W_n} )^2} &= \frac{(n-i)^2+j^2}{n^2}.
\end{align}
Having expressed the purity of all reduced density matrices in terms of the number of qubits and the dimension of the subspace that has been traced out, we can find a closed form expression for the W-state CE via 
\begin{align}
\CC_{\ket{\text{W}_n}}(\SC)&=1-\frac{1}{2^n}\sum_{\alpha\in \PC(\SC)} {\rm Tr} \rho_\alpha^2,\\
&= 1 - \frac{1}{2^n}\left(2+\sum_{i=1}^{n-1} {n \choose i} \frac{(n-i)^2+i^2}{n^2}\right),\\
\CC_{\ket{\text{W}_n}}(\SC) &= \frac{1}{2} -\frac{1}{2n},
\end{align}
where the sum was evaluated and simplified using Mathematica. This proves the empirically derived formulas in Refs. \cite{foulds2021controlled,beckey2021computable}.

\subsubsection{Analytical formula for $\CC_{\ket{L_n}}(\SC)$}
Line states, which we denote $\ket{L_n}$, are a special case of a broader class of states known as graph states~\cite{hein2004multiparty,cullen2022calculating}. They don't admit as simple a representation as W or GHZ states, but, as we will see, they're are far more entangled than W or GHZ states. 
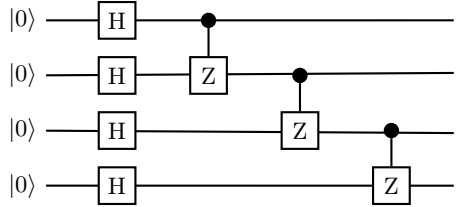
\begin{figure}[h!]
    \centering
    \begin{tikzpicture}[thick]
        \tikzset{operator/.style = {draw,fill=white,minimum size=1.5em},
        phase2/.style = {draw,fill=black,shape=circle,minimum size=5pt,inner sep=0pt}, 
        ,}
        \matrix[row sep=0.2cm, column sep=0.7cm] (circuit) {\node (q1) {$\ket{0}$};
        &\node[operator] (H1) {H};
        &\node[phase2] (P1) {};
        \\ \node (q2) {$\ket{0}$};
        &\node[operator] (H2) {H};
        &\node[operator] (Z2) {Z};
        &\node[phase2] (P2) {};
        \\ \node (q3) {$\ket{0}$}; 
        &\node[operator] (H3) {H};
        & &[-1.2cm] \node[operator] (Z3) {Z};
        &\node[phase2] (P3) {};
        \\ \node (q4) {$\ket{0}$}; 
        &\node[operator] (H4) {H};
        & &[-1.2cm] &[-1.2cm] \node[operator] (Z4) {Z};
        \\};
        \begin{pgfonlayer}{background} \draw[thick] (q1) -- (3.3, 1.1) (q2) -- (3.3, 0.4) (q3) -- (3.3, -0.4) (q4) -- (3.3, -1.1); \draw[thick] (P1) -- (Z2) (P2) -- (Z3) (P3) -- (Z4);
        \end{pgfonlayer}
    \end{tikzpicture}
\caption{\textbf{Line State Preparation.} Circuit diagram used to prepare a 4-qubit line state.}
\label{fig:line-state}
\end{figure}

Sup. Fig.~\ref{fig:line-state} shows the circuit used to prepare a $4$-qubit line state. The general $n$-qubit circuit follows the same pattern. Simply start in $\ket{+}$ state and then applies $CZ$ gates between all nearest neighbors. Note that one does not connect the $n$-th qubit to the first (this would be a different type of graph state called a ring state). One finds the following remarkable formula for the Line state CE
\begin{align}
    \CC_{\ket{L_n}}(\SC) &= 1- \frac{\text{Fibonacci}[n+2]}{2^n},
\end{align}
where $\text{Fibonacci}[n+2]$ denotes the $(n+2)$-th term in the Fibonacci sequence generated recursively via
\begin{align}
    \text{Fibonacci}[1]&=1, \\
    \text{Fibonacci}[2]&=1, \\
    \text{Fibonacci}[n]&= \text{Fibonacci}[n-1]+\text{Fibonacci}[n-2], \quad \forall \quad n \geq 3.
\end{align}
This unexpected formula was found numerically. The methods used for the $W$ state do not seem applicable to the Line state. However, a proof is likely possible using the methods recently introduced in Ref. \cite{cullen2022calculating}.

\subsection{Realistic quantum gates with Rydberg atoms}
\begin{figure}[h!]
\centering
\includegraphics[width=0.5\linewidth]{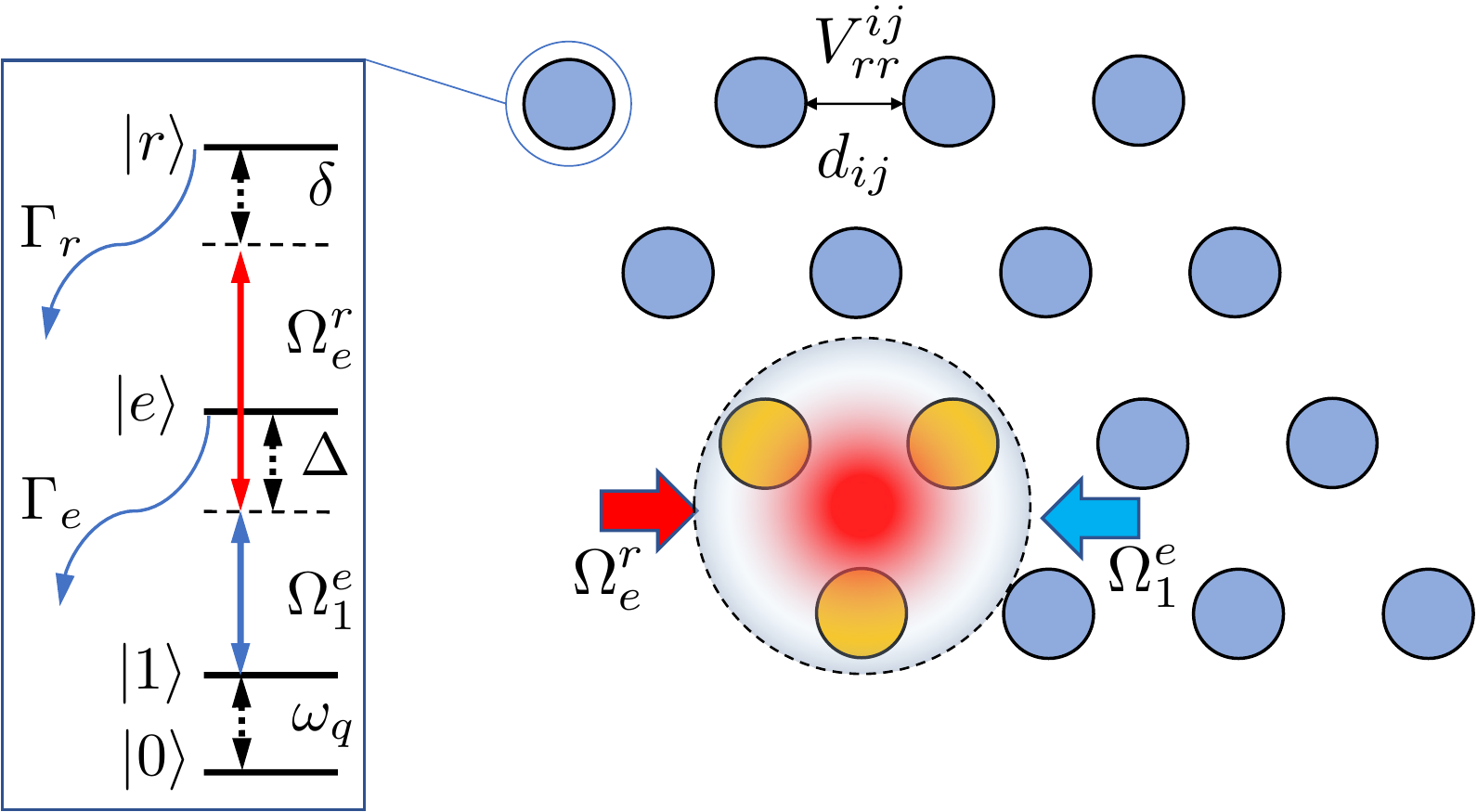}
\caption{\textbf{Array of trapped neutral atoms.}  The inset shows the energy levels and parameters considered in the model.}
\label{schemegates}
\end{figure}
In this section we outline the main ideas of Ref. \cite{pelegri2022high}, which we have used to model a realistic measurement of the CE in a Rydberg system using quantum gates. In Fig. \ref{schemegates} we sketch the general physical system that we have in mind. Neutral alkali atoms are trapped with optical tweezers in a lattice of arbitrary geometry, and the qubits are encoded in long-lived hyperfine ground states. Entangling operations are facilitated by coupling of the logical state $\ket{1}$ to a highly excited Rydberg state $\ket{r}$ via a two-photon excitation scheme through a far-detuned intermediate state $\ket{e}$, which is resolved into its hyperfine components $\ket{f_e,m_{f_e}}$. The Hamiltonian describing this excitation process can be written as

\begin{align}
\frac{\mathcal{H}}{\hbar} &=\sum_{f_e,m_{f_e}} \frac{1}{2} \left(\Omega_1^{ f_e,m_{f_e}}\ket{1}\bra{f_e,m_{f_e}} +{\Omega_1^{ f_e,m_{f_e}}}^*\ket{f_e,m_{f_e}}\bra{1}\right) \nonumber\\
&-\sum_{f_e,m_{f_e}}\Delta_{f_e,m_{f_e}} \ket {f_e,m_{f_e}}\bra{ f_e,m_{f_e}}\nonumber\\
&+\sum_{f_e,m_{f_e}} \frac{1}{2}\left(\Omega_{f_e,m_{f_e}}^{r} \ket{f_e,m_{f_e}} \bra{r}+{\Omega_{f_e,m_{f_e}}^{r}}^*\ket{r}\bra{f_e,m_{f_e}}\right)\nonumber\\
&-\delta\ket{r}\bra{r},
\end{align}
where $\Omega_1^{ f_e,m_{f_e}}$ and $\Omega_{f_e,m_{f_e}}^{r}$ are the Rabi frequencies of the drives from $\ket{1}$ to $\ket{f_e,m_{f_e}}$ and from $\ket{f_e,m_{f_e}}$ to $\ket{r}$,  $\Delta_{f_e,m_{f_e}}=\Delta-E(f_e,m_{f_e})$ represent the intermediate-state detunings composed by the laser detuning $\Delta$ and the hyperfine splittings $E(f_e,m_{fe})$, and $\delta$ is the total two-photon detuning. The excitation process suffers from losses due to the finite linewidths $\Gamma_e$ and $\Gamma_r$ of the states $\ket{e}$ and $\ket{r}$. We describe these scattering process by introducing effective non-hermitian terms in the Hamiltonian given by
\begin{equation}
\mathcal{H}' =-i\hbar\displaystyle\sum_{f_e,m_{f_e}} \Big[ \Gamma_e/2\ket {f_e,m_{f_e}}\bra{ f_e,m_{f_e}}\Big]+\Gamma_r/2\vert r \rangle \langle r\vert.
\end{equation}
The Rydberg states experience dipole-induced pairwise interaction described by the Hamiltonian
\begin{equation}
\mathcal{H}_\mathrm{dd}=\displaystyle\sum_{j<i}\hbar V_{rr}^{ij}\vert r_ir_j\rangle\langle r_ir_j\vert,
\end{equation}
where the strength $V_{rr}^{ij}$ depends on the separation $d_{ij}$ between the atoms $i$ and $j$ and their orientation with respect to the quantization axis. Under time evolution with the total Hamiltonian $\mathcal{H}_{\mathrm{tot}}=\mathcal{H}+\mathcal{H}'+\mathcal{H}_{\mathrm{dd}}$, we seek to apply global laser pulses to an ensemble of $k+1$ atoms which realize a multiply controlled phase gate $C^{k}Z$ described by the unitary transformation 
\begin{equation}
U_{C^{k}Z}=2\left(\otimes_{k+1}\ket{0} \otimes_{k+1}\bra{0}\right)-I.
\end{equation}
As described in \cite{pelegri2022high}, this is achieved by working in the fully Rydberg blockaded regime and performing Adiabatic Rapid Passage from the ground state $\ket{1}$ to the manifold of states with a single Rydberg excitation. For the purposes of measuring the CE, we are interested in realising a $CZ$ gate (1 control atom) and a $CCZ$ gate (2 control atoms), which can be converted respectively into CNOT and Toffoli gates by application of additional Hadamard gates to the target atoms.  In order to maximise the gate fidelity, the Rydberg interactions $V_{rr}^{ij}$ need to be as large as possible - i.e., the atoms involved in the gate need to be as close as possible- and the pulses should be designed to minimise the losses due to photon scattering from $\ket{e}$ and $\ket{r}$. In a concrete setting with Cs atoms and realistic parameters, after pulse optimisation we obtain the following effective matrices for the $CZ$ and $CCZ$ gates
\begin{align}
U_{CZ}&=\ket{00}\bra{00}+0.9990e^{i0.9906\pi}(\ket{01}\bra{01}+\ket{10}\bra{10})+0.9986e^{i1.000\pi}\ket{11}\bra{11}\label{matCZ},\\
U_{CCZ}&=\ket{000}\bra{000}+0.9981e^{i0.9845\pi}(\ket{001}\bra{001}+\ket{010}\bra{010}+\ket{100}\bra{100})\nonumber\\
&+0.9973e^{i0.9934\pi}(\ket{110}\bra{110}+\ket{101}\bra{101}+\ket{011}\bra{011})+0.9963e^{i0.9911\pi}\ket{111}\bra{111}\label{matCCZ}.
\end{align}
Note that these matrices are not unitary due to the loss of population caused by the scattering. All the results shown in the main text have been obtained by simulating the c-SWAP test and Bell measurement quantum circuits using these effective gate matrices and assuming perfect single-qubit gates.

\end{document}